\RequirePackage{fix-cm}
\documentclass[smallextended]{svjour3}       
\smartqed  
\usepackage[draft]{fixme}

\usepackage{graphicx}
%
%
%
%
%

\usepackage{algorithm}
\usepackage{algorithmicx}
\usepackage{tikz}
\usepackage{wrapfig}
\usepackage{stmaryrd}
\usepackage[utf8]{inputenc}

\usepackage{amsmath, amssymb}
\usepackage{amsfonts}
\usepackage{textcomp}


\newcommand\nulleins{[\nearrow]}
\newcommand\einsnull{[\searrow]}
\newcommand\nullnull{[\rightharpoondown]}
\newcommand\einseins{[\rightharpoonup]}
\newcommand{\postset}[1]{{{#1}^{\bullet}}}
\newcommand{\preset}[1]{{{}^{\bullet}{#1}}}

\def\nodes{\range{n}}
\def\B{\mathbb B}
\def\M{\mathbb M}
\def\DEF{\stackrel{\Delta}=}
\def\EQDEF{\stackrel{\Delta}\Longleftrightarrow}
\def\encode#1{\Tilde #1} 
\def\encodemp#1{\Tilde{\Tilde #1}} 

\def\dite#1#2{\xrightarrow[\mathrm{#1}]{#2}}
\def\ite#1{\dite{gen}{#1}}
\def\aite#1{\dite{async}{#1}}
\def\site#1{\dite{sync}{#1}}
\def\mpite#1{\dite{mpa}{#1}}
\def\areach#1{\aite{#1}\nolinebreak\negthickspace{}^*\,}
\def\mpreach#1{\mpite{#1}\nolinebreak\negthickspace{}^*\,}

\def\closure{\negthickspace{}^*\,}
\def\paite#1{\dite{atom}{#1}}
\def\psite#1{\dite{step}{#1}}
\def\pmite#1{\dite{mstep}{#1}}
\def\piite#1{\dite{istep}{#1}}

\def\UP{\!\!\uparrow}
\def\DW{\!\!\downarrow}
\def\IF{\mathrm{\ if\ }}
\def\ELSE{\mathrm{\ else\ }}

\def\card#1{|#1|}
\def\range#1{\{1, \dots, #1\}}

\def\cpnt{(P,T,\mi{pre},\mi{cont},\mi{post},M_0)}

\def\bnofcpn#1{\llbracket #1 \rrbracket}
\def\cpnofbn#1{\llparenthesis #1 \rrparenthesis}

\def\f#1{\operatorname{#1}}
\def\DNF{\f{DNF}}

\def\nth{$^{\text{th}}$ }

\usepackage{color}

\tikzstyle{every matrix}=[ampersand replacement=\&]
\tikzstyle{shorthandoff}=[]
\tikzstyle{shorthandon}=[]
\def\BCube{
\matrix[column sep=0.8cm, row sep=1cm,font=\footnotesize] {
  \node (s010) {$010$}; \&
  \node (s110) {$110$}; \&
  \node (s011) {$011$}; \&
  \node (s111) {$111$};
\\
  \node (s000) {$000$}; \&
  \node (s100) {$100$}; \&
  \node (s001) {$001$}; \&
  \node (s101) {$101$};
\\
};
}
\tikzstyle{tplace}=[circle,draw,inner sep=1.5mm]
\tikzstyle{transition} = [rectangle, draw, inner sep=0,text width=0.4cm, minimum height=0.4cm, text centered]
\tikzstyle{state} = [circle, draw, inner sep=0,text width=0.4cm, minimum height=0.4cm, text centered]
\tikzstyle{event} = [rectangle, draw, inner sep=0,text width=0.7cm,
minimum height=0.5cm, text centered,outer sep=0]
\tikzstyle{cond} = [circle, draw, inner sep=0,text width=0.4cm,
minimum height=0.4cm, text centered]
\tikzstyle{none} = [draw=none, fill=none, inner sep=0,minimum height=0.2cm]

\newcommand{\eqdef}{\stackrel{\Delta}{=}}
\newcommand{\setif}{\stackrel{\Delta}{\Longleftrightarrow}}

\newcommand{\varn}{\mathsf{v}}
\newcommand{\vari}{\mathit{var}}
\newcommand{\vali}{\mathit{val}}

\newcommand\by{\begin{eqnarray}}
\newcommand\ey{\end{eqnarray}}
\newcommand\bys{\begin{eqnarray*}}
\newcommand\eys{\end{eqnarray*}}

\newcommand\bet{\begin{theorem}}
\newcommand\ent{\end{theorem}}
\newcommand\bel{\begin{lemma}}
\newcommand\enl{\end{lemma}}
\newcommand\bec{\begin{corollary}}
\newcommand\enc{\end{corollary}}
\newcommand\bum{\begin{enumerate}}
\newcommand\eum{\end{enumerate}}
\newcommand\bit{\begin{itemize}}
\newcommand\eit{\end{itemize}}

\newcommand\stepn{\tau}
\newcommand{\markn}{\mathit{M}}

\newcommand{\non}[1]{\overline{#1}} 
\newcommand{\context}[1]{{\underline{#1}}}

\newcommand{\imasycf}{\leadsto}

\newcommand{\places}{\mathit{P}}
\newcommand\placen{\mathit{p}}
\newcommand\transn{\mathit{t}}

\newcommand\trans{{\mathit{T}}}
\newcommand\netn{N}

\newcommand{\mi}[1]{\mathit{#1}}
\newcommand{\pre}[1]{{^{\bullet}{#1}}}
\newcommand{\post}[1]{{{#1}^{\bullet}}}
\newcommand{\cont}[1]{\underline{#1}}
\newcommand{\precont}[1]{\pre{\cont{#1}}}

\newcommand{\splitting}{\mi{split}}
\newcommand{\Approx}{\mi{Approx}}
\newcommand{\abstr}{\mi{abstr}}

\newcommand\upt{\mathbf{up}}
\newcommand\downt{\mathbf{dw}}

\begin{document}
\sloppy

\title{Concurrency in Boolean networks}


\author{Thomas Chatain \and
    Stefan Haar \and
    Juraj Kol\v c\'ak \and
    Lo\"ic Paulev\'e \and Aalok Thakkar
}

\authorrunning{Chatain, Haar, Kol\v c\'ak, Paulev\'e, Thakkar}

\institute{T. Chatain, S. Haar, J. Kol\v c\'ak, A. Thakkar\at
    Inria Saclay-\^Ile-de-France\\
    LSV, CNRS, \& ENS Paris-Saclay\\
    Universit\'e Paris-Saclay, France\\
  \email{thomas.chatain@lsv.fr}\\
  \email{stefan.haar@lsv.fr}\\
  \email{juraj.kolcak@lsv.fr}\\
 \emph{Present address:} of A. Thakkar\at
 Department of Computer and Information Science\\
 University of Pennsylvania\\
Philadelphia, PA 19104, USA\\
   \email{athakkar@seas.upenn.edu}\\
           \and
L. Paulev\'e \at
Univ. Bordeaux, Bordeaux INP, CNRS, LaBRI, UMR5800\\
F-33400 Talence, France\\
CNRS, LRI UMR 8623, Univ. Paris-Sud -- CNRS\\
Universit\'e Paris-Saclay, France\\
\email{loic.pauleve@labri.fr}
}

\date{Received: date / Accepted: date}

\maketitle

\begin{abstract}
Boolean networks (BNs) are widely used to model the qualitative dynamics of biological systems.
Besides the logical rules determining the evolution of each component with respect to the state of
its regulators, the scheduling of component updates can have a dramatic impact on the predicted
behaviours.
In this paper, we explore the use of Read (contextual) Petri Nets (RPNs) to study dynamics of BNs from a
concurrency theory perspective.
After showing bi-directional translations between RPNs and BNs and analogies between results on synchronism
sensitivity, we illustrate that usual updating modes for BNs can miss plausible behaviours, i.e.,
incorrectly conclude on the absence/impossibility of reaching specific configurations.
We propose an encoding of BNs capitalizing on the RPN semantics enabling more behaviour than the generalized asynchronous updating mode.
The proposed encoding ensures a correct abstraction of any multivalued refinement, as
one may expect to achieve when modelling biological systems with no assumption on its time features.

\keywords{Discrete dynamical systems\and Models of concurrency\and Synchronism\and Reachability}
\end{abstract}

\section{Introduction}
\label{sec:introduction}

Boolean networks (BNs) model dynamics of systems where several components (nodes) interact.
They specify for each node an update function to determine
its next value according to the configuration (global state) of the network.
In addition, an \emph{update mode} for scheduling the application of functions has to be
specified to determine the set of reachable configurations.

BNs are increasingly used to model dynamics of biological interaction networks, such as gene networks and
cellular signalling pathways.
In these practical applications, it is  usual to assess the accordance of a BN with the concrete
modeled system by checking if the observed behaviours are reproducible by the abstract BN
\cite{Rougny2016,Traynard16,Collombet2017}.
For instance, if one observes that the system can reach a configuration $y$ from
configuration $x$, one may expect it is indeed the case in the BN model.
The designed Boolean functions typically do not model the system correctly
whenever it is not the case and should thus be fixed prior to further analysis.
With this perspective, the choice of the update mode is crucial, as it is known
to have a strong influence on the reachable configurations of the network.

More fundamentally, the relationships between different updating modes have been
extensively studied for function-centered models such as
cellular automata
\cite{Schonfisch1999,Baetens2012} and
Boolean networks
\cite{Kauffman69,Thomas73,Garg08-Bioinf,Aracena09,Noual2017,Aracena16},
on which this article is focused.

Interestingly, the study of updating mechanisms in networks and their effect on
the emerging global dynamics has also been widely addressed in the field of
discrete and hybrid concurrent systems, especially with Petri
nets \cite{JanKou97,bald-corr-mont,BusiP96,vogler-TCS02,Winkowski98}.
Petri nets are a classical formal framework for studying concurrency, offering a
fine-grained specification of the \emph{conditions} (partial configurations) for
\emph{events} (partial configuration changes).
This decomposed view of causality and effect of updates enables capturing events
which can indifferently occur sequentially or in parallel, and events having conflicts
(triggering one would pre-empt the application of the second).

In the literature, many variants of Petri nets have been employed to model and
simulate various biological processes (see \cite{Goss98,Popova-Zeugmann05} for examples and
\cite{DBLP:journals/bib/Chaouiya07} for a review paper),
but little work considered
the link between the theoretical work on concurrency in Petri nets and the
theoretical work in Boolean networks.
In \cite{Steggles07,ChaouiyaNRT11,CHJPS14-CMSB}, encodings of BNs and their multi-valued
extension in certain classes of Petri nets have been proposed, often as means to
take advantage of existing dynamical analysis already implemented for Petri
nets, e.g., model-checking.

This paper aims at building a bridge between the theoretical work in BNs on the one hand and Read Petri Nets
(RPNs), also known as contextual Petri nets, on the other.
RPNs augment ordinary Petri Nets (PNs) with read arcs to model read-only access to resources.
It is always possible to simulate a RPN by an ordinary PN,
see Figure \ref{fig:redvsord} and the discussion below. 
Our  choice of using RPN is motivated by the fact that the connection between BNs and RPNs is more intuitive; but there is also an important technical advantage in using RPNs directly, rather than equivalent ordinary models.
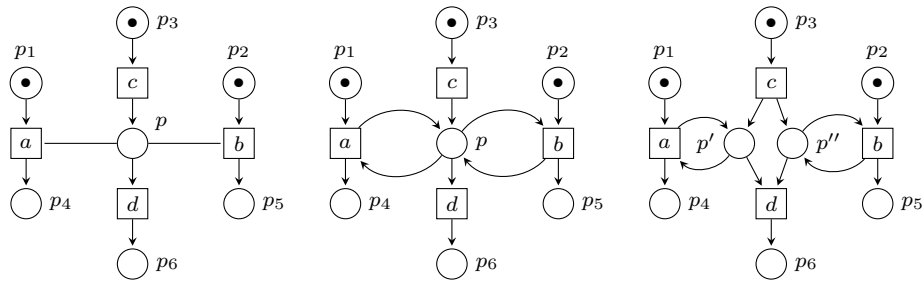
\begin{figure}
\def\b{1.4}
\def\c{.8}

\begin{tikzpicture}[>=stealth,shorten >=1pt,node distance=\c cm,auto]
  \node[state] (p_1a) at (3*\b, 0*\c) [label=above:\text{$p_1$}] {$\bullet$}; 
  \node[state] (p_1b) at (6*\b, 0*\c) [label=above:\text{$p_1$}] {$\bullet$}; 
  \node[state] (p_1) at (0*\b, 0*\c) [label=above:\text{$p_1$}] {$\bullet$};
  \node[state] (p_2) at (2*\b, 0*\c) [label=above:\text{$p_2$}] {$\bullet$};
  \node[state] (p_2a) at (5*\b, 0*\c) [label=above:\text{$p_2$}] {$\bullet$};
  \node[state] (p_2b) at (8*\b, 0*\c) [label=above:\text{$p_2$}] {$\bullet$};
  \node[state] (p_3) at (1*\b, 1*\c) [label=right:\text{$p_3$}] {$\bullet$};
  \node[state] (p_3a) at (4*\b, 1*\c) [label=right:\text{$p_3$}] {$\bullet$};
  \node[state] (p_3b) at (7*\b, 1*\c) [label=right:\text{$p_3$}] {$\bullet$};
  \node[state] (p_4) at (0*\b, -2*\c) [label=right:\text{$p_4$}] {}; 
 \node[state] (p_4a) at (3*\b, -2*\c) [label=right:\text{$p_4$}] {}; 
 \node[state] (p_4b) at (6*\b, -2*\c) [label=right:\text{$p_4$}] {}; 
  \node[state] (p) at (1*\b, -1*\c) [label=20:\text{$p$}] {}; 
  \node[state] (pa) at (4*\b, -1*\c) [label=right:\text{$p$}] {};
   \node[state] (pb1) at (6.7*\b, -1*\c) [label=left:\text{$p'$}] {}; 
   \node[state] (pb2) at (7.2*\b, -1*\c) [label=right:\text{$p''$}] {};
  \node[state] (p_5a) at (5*\b, -2*\c) [label=right:\text{$p_5$}] {}; 
  \node[state] (p_5b) at (8*\b, -2*\c) [label=right:\text{$p_5$}] {}; 
  \node[state] (p_5) at (2*\b, -2*\c) [label=right:\text{$p_5$}] {};
  \node[state] (p_6) at (1*\b, -3*\c) [label=right:\text{$p_6$}] {};
 \node[state] (p_6a) at (4*\b, -3*\c) [label=right:\text{$p_6$}] {};
 \node[state] (p_6b) at (7*\b, -3*\c) [label=right:\text{$p_6$}] {};

  \node[transition] (a) at (0*\b, -1*\c) {$a$};
  \path[->] (p_1) edge (a);
  \path[-] (p) edge (a);
  \path[->] (a) edge (p_4);

  \node[transition] (aa) at (3*\b, -1*\c) {$a$};
  \path[->] (p_1a) edge (aa);
  \path[->] (pa) edge  [bend left=50] (aa);
  \path[->] (aa) edge  [bend left=50] (pa);
  \path[->] (aa) edge (p_4a);

  \node[transition] (ab) at (6*\b, -1*\c) {$a$};
  \path[->] (p_1b) edge (ab);
  \path[->] (pb1) edge [bend left=50] (ab);
  \path[->] (ab) edge [bend left=50] (pb1) ;
  \path[->] (ab) edge (p_4b);

  \node[transition] (b) at (2*\b, -1*\c) {$b$};
  \path[->] (p_2) edge (b);
  \path[-] (p) edge (b);
  \path[->] (b) edge (p_5);
  
  \node[transition] (ba) at (5*\b, -1*\c) {$b$};
  \path[->] (p_2a) edge (ba);
  \path[->] (pa) edge  [bend left=50] (ba);
  \path[->] (ba) edge  [bend left=50] (pa);
  \path[->] (ba) edge (p_5a);
  
  \node[transition] (bb) at (8*\b, -1*\c) {$b$};
  \path[->] (p_2b) edge (bb);
  \path[->] (pb2) edge [bend left=50] (bb);
  \path[->]  (bb) edge [bend left=50] (pb2);
  \path[->] (bb) edge (p_5b);

  \node[transition] (c) at (1*\b, 0*\c) {$c$};
  \path[->] (p_3) edge (c);
  \path[->] (c) edge (p);
  
  \node[transition] (ca) at (4*\b, 0*\c) {$c$};
  \path[->] (p_3a) edge (ca);
  \path[->] (ca) edge (pa);
  
  \node[transition] (cb) at (7*\b, 0*\c) {$c$};
  \path[->] (p_3b) edge (cb);
  \path[->] (cb) edge (pb1);
 \path[->] (cb) edge (pb2);

  \node[transition] (d) at (1*\b, -2*\c) {$d$};
  \path[->] (p) edge (d);
  \path[->] (d) edge (p_6);
  
  \node[transition] (da) at (4*\b, -2*\c) {$d$};
  \path[->] (pa) edge (da);
  \path[->] (da) edge (p_6a);
  
  \node[transition] (db) at (7*\b, -2*\c) {$d$};
  \path[->] (pb1) edge (db);
  \path[->] (pb2) edge (db);
  \path[->] (db) edge (p_6b);


\end{tikzpicture}
\caption{\label{fig:redvsord} A Read Petri net $\mathcal{R}$ (left) and two different interpretations $\mathcal{N}_1$ and $ \mathcal{N}_2$ (center, right) of $\mathcal{R}$  as ordinary Petri nets; following \cite{BBCKRS12}.}
\end{figure}

Let us examine Figure \ref{fig:redvsord} more closely. In the read net $\mathcal{R}$ on the  left hand side, transitions $a$ and $b$ will be enabled while $p$ is marked, i.e. between the firings of $c$ and $d$; once a token is available on $p$, both $a$ and $b$ can fire independently and jointly, because the read arcs linking them to $p$ do not require \emph{removal} of the token from $p$. In $\mathcal N_1$ in the middle,  the firing of $a$ and $b$ in any order is still possible, however their \emph{synchronous} firing is prohibited by the conflict over the token on $p$. Only Petri net $\mathcal N_2$ is equivalent to $\mathcal{R}$; synchronous firing of $a$ and $b$ is obtained at the expense of duplicating $p$ by creating $p'$ and $p''$. In other words, faithful rendering of read net behaviour by ordinary nets requires the creation of multiple places for each place 'read' by several transitions, in order to pass from a read net to an ordinary net model. As will be seen, the constructions required for translation between read nets and BN in their turn also multiply place elements ; putting these constructions together is possible, but makes the resulting nets still larger and a lot less intuitive to apprehend and analyze.

In this paper, we consider the class of \emph{safe} (or 1-bounded) RPNs where each place can be marked
by at most one token, which makes it a natural choice for linking with Boolean networks.
This class has been extensively studied in the literature
and enables fine-grained definitions of different concurrent semantics
as it is detailed in Sect.~\ref{sec:CPN} and on which results of this article are built upon.

Below, we will give 
 bi-directional equivalent connections between the two formalisms of BNs and RPNs; this allows to  we use a
classical result from Petri net theory to show the PSPACE-completeness of reachability
in asynchronous BNs.
Then, we exhibit analogies of results on update mode comparisons.
Importantly, we show how the concurrent view of updates brings new
updating modes for BNs, enabling new behaviours and meeting with a correct
abstraction of multi-level systems. This result is illustrated on a small BN
which occurs in different models of actual biological networks, and for which the usual updating
modes fail to capture behaviours existing in refined models
(Sect.~\ref{sec:example}; Fig.~\ref{fig:example-beyond}).

\paragraph{Outline.}
Sect.~\ref{sec:bn} gives basic definitions of BNs, their asynchronous, synchronous,
and generalized asynchronous update mode, and their influence graph.
Sect.~\ref{sec:CPN} defines safe RPNs and their atomic, step, and interval semantics.
Sect.~\ref{sec:encodings} brings encodings of BNs into safe RPNs and vice-versa, the latter allowing
to derive that reachability in BNs is PSPACE-complete.
Sect.~\ref{sec:sensitivity} establishes an analogy between the results on synchronism sensitivity in
BNs and RPNs.
Sect.~\ref{sec:bn-is} provides an encoding of the interval semantics of RPNs into asynchronous BNs,
initially published in the conference paper \cite{beyond-general}.
Sect.~\ref{sec:beyond-general} first illustrates the benefits of the interval semantics on a simple BN
showing that usual BN semantics can miss plausible behaviours.
Then, an extension of the interval semantics is proposed in order to meet with a correct abstraction
of behaviours achievable in a multivalued refinement.
Finally, Sect.~\ref{sec:discuss} summarizes the contributions and discusses further work.

\paragraph{Notations.}
If $S$ is a finite set, $\card S$ denotes its cardinality.
$\B =\{0,1\}$,  and we write $\wedge$, $\vee$, $\neg$ for logic operators \emph{and},
\emph{or}, \emph{not};
given a set of literals $L=\{l_1, \dots, l_k\}$,
$\bigwedge_L \equiv l_1\wedge \dots\wedge l_k$ with
$\bigwedge_{\emptyset} = 1$, and
$\bigvee_L \equiv l_1\vee \dots\vee l_k$
with
$\bigvee_{\emptyset} = 0$.

\section{Boolean networks with function-centered specification}\label{sec:bn}

\label{sec:defs}
Given a \emph{configuration} \(x\in \B^n\) and \(i \in \range n\), we denote \(x_i\) the
\(i\textsuperscript{th}\) component of \(x\), so that \(x = x_1 \dots x_n\).
Given two configurations $x,y\in \B^n$, the components that differ are noted
$\Delta(x,y)\DEF\{ i\in \range n\mid x_i\neq y_i\}$.

\begin{definition}[Boolean network]
A Boolean network (BN) of dimension $n$ is a collection of functions
$f=\langle f_1, \ldots, f_n\rangle$ where
$\forall i\in\range n, f_i:\B^n\to \B$.
\end{definition}

Given $x\in \B^n$, we write $f(x)$ for $f_1(x)\dots f_n(x)$.

Fig.~\ref{fig:example}~(a) shows an example of BN of dimension 3.

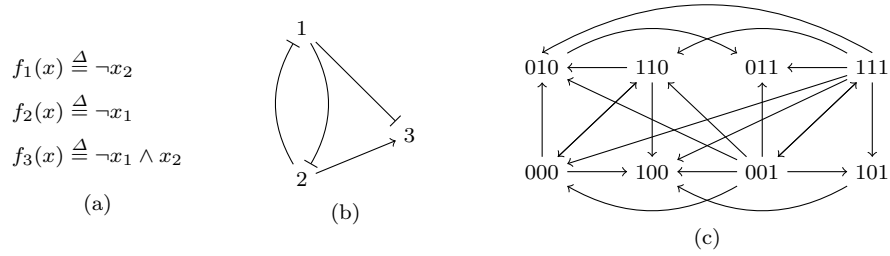
\begin{figure}[t]
\centering
\begin{minipage}{0.2\textwidth}
\centering
\begin{align*}
f_1(x) &\DEF \neg x_2\\
f_2(x) &\DEF\neg x_1\\
f_3(x) &\DEF \neg x_1 \wedge x_2
\end{align*}
(a)
\end{minipage}
\hfill
\begin{minipage}{0.3\textwidth}
\centering
\begin{tikzpicture}[node distance=2cm]
\node (1) {1};
\node[below right of=1] (3) {3};
\node[below of=1] (2) {2};
\path[->]
(1) edge[bend left,-|] (2)
(2) edge[bend left,-|] (1)
(1) edge[-|] (3)
(2) edge (3)
;

\end{tikzpicture}

(b)
\end{minipage}
\hfill
\begin{minipage}{0.45\textwidth}
\centering
\begin{tikzpicture}
\BCube
\path[->]
(s000) edge (s100)
(s000) edge (s110)
(s000) edge (s010)
(s010) edge[bend left] (s011)
(s110) edge (s100)
(s110) edge (s010)
(s110) edge (s000)
(s001) edge (s011)
(s001) edge (s110)
(s001) edge (s010)
(s001) edge (s100)
(s001) edge (s101)
(s001) edge[bend left] (s000)
(s001) edge (s111)
(s101) edge[bend left] (s100)
(s111) edge (s011)
(s111) edge[bend right] (s110)
(s111) edge (s100)
(s111) edge (s001)
(s111) edge (s101)
(s111) edge (s000)
(s111.north) edge[bend right] (s010.north)
;
\end{tikzpicture}

(c)
\end{minipage}
\caption{
(a) Example BN $f$ of dimension 3;
(b) Influence graph $G(f)$;
positive edges are with normal tip;
negative edges are with bar tip;
(c) Transition relations between states in $\B^n$ according to the generalized
asynchronous semantics of $f$.
}

\label{fig:example}
\end{figure}

When modelling biological systems, each node $i\in\nodes$ usually represents a biochemical species,
being either active (or present, value $1$) or inactive (or absent, value $0$).
Each function \(f_i\) indicates how is the evolution of the value of \(i\)
influenced by the current value of other components \(j \in \range n\).
However, this description can be interpreted in several ways, therefore several
updating modes coexist for BNs, depending on the assumptions about the order in which
the evolutions predicted by the \(f_i\) apply.

The (fully) \emph{asynchronous updating} assumes that only one component is
updated at each time step. The choice of the component to update is non-deterministic.
\begin{definition}[Asynchronous updating]
Given a BN $f$, the binary irreflexive relation $\aite f\,\subseteq \B^n\times\B^n$
is defined as:
\[
x\aite f y\EQDEF \exists i\in \range n, \Delta(x,y)=\{i\} \wedge y_i=f_i(x)
\enspace.
\]
We write $\areach f$ for the transitive closure of $\aite f$.
\end{definition}

The \emph{synchronous updating} can be seen as the opposite:
\emph{all} components are updated at each time step. This leads to a purely
deterministic dynamics.
\begin{definition}[Synchronous updating]
Given a BN $f$, the binary irreflexive relation $\site f\,\subseteq \B^n\times\B^n$
is defined as:
\[
x\site f y\EQDEF x\neq y\wedge\forall i\in\range n, y_i=f_i(x)
\enspace.
\]
\end{definition}

By forcing all the components to evolve simultaneously, the synchronous updating
makes a strong assumption on the dynamics of the system. In many concrete cases,
for instance in systems biology, this assumption is often unrealistic, at
least because the components model the quantity of some biochemical species
which evolve at different speeds.

As a result, the synchronous updating fails to
describe some behaviours, like the transition \(010 \rightarrow 011\) represented
in Fig.~\ref{fig:example}~(c) which represents the activation of species 3 when
species 1 is inactive and species 2 is active (\(f_3(010) = 1\)).
\medskip
There are also transitions which are possible in the synchronous but not in the
asynchronous updating, for instance \(000 \rightarrow 110\). Remark that
\(110\) is not even reachable from \(000\) in the asynchronous updating.
\medskip

The \emph{generalized asynchronous updating} generalizes both the asynchronous
and the synchronous updating: it allows updating synchronously any nonempty subset
of components.
\begin{definition}[Generalized asynchronous updating]
Given a BN $f$, the binary irreflexive relation $\ite f\,\subseteq \B^n\times\B^n$
is defined as:
\[
x\ite f y\EQDEF x\neq y\wedge \forall i\in\Delta(x,y): y_i=f_i(x)
\enspace.
\]
\end{definition}

Clearly, \(x\aite f y \Rightarrow x\ite f y\) and \(x\site f y \Rightarrow x\ite
f y\). The converse propositions are false in general.

\noindent
Note that we forbid “idle” transitions ($x\rightarrow x$) regardless of the
updating mode.

Other updating modes like sequential or block sequential have also been
considered in the literature on cellular automata and BNs
\cite{Baetens2012,Aracena09}, and usually lead to transitions allowed by the
generalized asynchronous updating.

\medskip

For each node $i\in\nodes$ of the BN, $f_i$ typically depends only on a subset of
nodes of the network.
The \emph{influence graph} of a BN (also called interaction or causal graph)
summarizes these dependencies by having an edge from node $j$ to $i$ if $f_i$
depends on the value of $j$.
Formally, $f_i$ depends on $x_j$ if there exists a configuration $x\in \B^n$ such that
$f_i(x)$ is different from $f_i(x')$ where $x'$ differs from $x$ solely in the
component $j$ ($x'_j = \lnot x_j$).
Moreover, assuming $x_j=0$ (therefore $x'_j=1$), we say that $j$ has a positive
influence on $i$ (in configuration $x$) if $f_i(x) < f_i(x')$, and a negative influence
if $f_i(x) > f_i(x')$.
It is possible that a node has different signs of influence on $i$ in different
configurations, leading to non-monotonic $f_i$.
It is worth noticing that different BNs can have the same influence graph.

\begin{definition}[Influence graph]
Given a BN $f$, its \emph{influence graph} $G(f)$ is a directed graph $(\range n,E_+,E_-)$ with
\emph{positive}
and \emph{negative} edges such that
\begin{align*}
(j,i)\in E_+ &\EQDEF
    \exists x,y\in \B^n: \Delta(x,y)=\{j\}, x_j<y_j, f_i(x) < f_i(y)
\\
(j,i)\in E_- &\EQDEF
    \exists x,y\in \B^n: \Delta(x,y)=\{j\}, x_j<y_j, f_i(x) > f_i(y)
\end{align*}

A (directed) cycle composed of edges in $E_+\cup E_-$ is said \emph{positive} when it is
composed by an even number of edges in $E_-$ (and any number of edges in $E_+$), otherwise it is \emph{negative}.

When $E_+\cap E_-=\emptyset$, we say that $f$ is \emph{locally monotonic}.
\end{definition}

The influence graph is an important object in the literature of BNs
\cite{TT95,ADG04a}.
For instance, many studies have shown that one can derive dynamical features of a BN $f$ by the sole
analysis of its influence graph $G(f)$.
Importantly, the presence of negative and positive cycles in the influence graph, and the way they
are intertwined can help to determine the nature of attractors
(that are the smallest sets of configurations closed by the transition relationship)
\cite{Richard10-AAM},
and derive bounds on the number of fixpoints and attractors a BN having the same
influence graph can have
\cite{RRT08,Aracena08,ARS17}.

\section{Read Petri Nets with transition-centered specifications}
\label{sec:CPN}

In the semantics of BNs, each node computes its next value
according to the value of the other nodes. We have seen in the previous section that
this general rule does not suffice to define the precise behaviour and several
updating modes can be considered.

This situation is very similar to what happens in contextual or Read Petri nets (RPNs),
where read arcs have been introduced to model read-only access to resources, for a matter of concurrency.
Interestingly, the introduction of read arcs in
Petri nets has also led to several variants of the semantics. In this section,
we present some of them, mainly taken from~\cite{CHKS-pn15}. Next, relying on a
natural encoding of BNs in RPNs (Sect.~\ref{sec:encodings}), we will establish
a correspondence between updating modes for BNs and semantics of RPNs. In
particular, we transpose the \emph{interval semantics} of RPNs to a new
semantics for BNs (Sect.~\ref{sec:bn-is}) which retrieves some plausible
scenarios that were missed by other updating modes.

\subsection{Read Petri Nets}
\label{sec:ConPN}

We consider only \emph{safe} Read Petri nets (RPNs), i.e., RPNs with at most one token in each place at any time.

\begin{figure}[tb]
  \centering
  \def\b{2}
\def\c{1.2}

\begin{tikzpicture}[>=stealth,shorten >=1pt,node distance=\c cm,auto]
  \node[state] (p_1) at (0*\b, 0*\c) [label=above:\text{$p_1$}] {$\bullet$};
  \node[state] (p_2) at (2*\b, 0*\c) [label=above:\text{$p_2$}] {$\bullet$};
  \node[state] (p_3) at (1*\b, -1*\c) [label=right:\text{$p_3$}] {$\bullet$};
  \node[state] (p_4) at (0*\b, -2*\c) [label=left:\text{$p_4$}] {};
  \node[state] (p_5) at (2*\b, -2*\c) [label=right:\text{$p_5$}] {};
  \node[state] (p_6) at (1*\b, -3*\c) [label=right:\text{$p_6$}] {};

  \node[transition] (a) at (0*\b, -1*\c) {$a$};
  \path[->] (p_1) edge (a);
  \path[-] (p_2) edge (a);
  \path[->] (a) edge (p_4);

  \node[transition] (b) at (2*\b, -1*\c) {$b$};
  \path[->] (p_2) edge (b);
  \path[-] (p_1) edge (b);
  \path[->] (b) edge (p_5);

  \node[transition] (c) at (1*\b, -2*\c) {$c$};
  \path[->] (p_3) edge (c);
  \path[-] (p_1) edge (c);
  \path[-] (p_5) edge (c);
  \path[->] (c) edge (p_6);

  \node[transition] (d) at (1*\b, -4*\c) {$d$};
  \path[->] (p_4) edge (d);
  \path[->] (p_5) edge (d);
  \path[->] (p_6) edge (d);
  \path[->] (d) edge [bend left=100] (p_1);
  \path[->] (d) edge [bend right=100] (p_2);
  \path[->] (d) edge [bend left=80] (p_3);

\end{tikzpicture}
  \caption{A Read Petri net (RPN). Neither atomic semantics nor step semantics
    allow \(d\) to fire, while the more permissive non-atomic semantics allows
    it.}
  \label{fig:interval}
\end{figure}
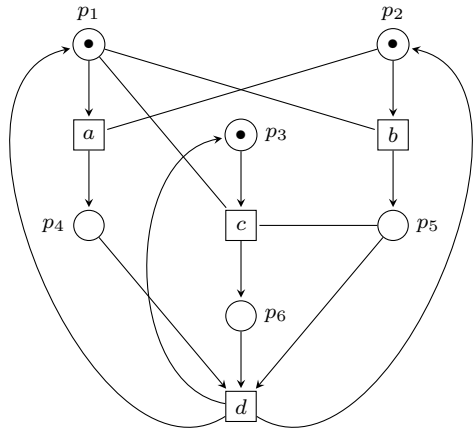

\begin{definition}[Read Petri Net (RPN)]
  A {\em Read Petri net} is a tuple \((P, T, \mi{pre}, \mi{cont},
  \mi{post}, M_0)\) where $P$ and $T$ are finite sets of {\em places} and {\em
    transitions} respectively, $\mi{pre}$, $\mi{cont}$ and $\mi{post}$ map each transition
  \(t \in T\) to its (nonempty) \emph{preset} denoted \(\pre{t} \eqdef \mi{pre}(t)
  \subseteq P\), its (possibly empty) {\em context} denoted \(\cont{t} \eqdef
  \mi{cont}(t) \subseteq P \setminus \pre{t}\) and its (possibly empty) {\em
    postset} denoted \(\post{t} \eqdef \mi{post}(t) \subseteq P\); \(M_0
  \subseteq P\) is the \emph{initial marking}.
  We usually denote \(\precont{t} \eqdef \pre{t} \cup \cont{t}\).
\end{definition}
For simplicity, we assume that for every transition $t$, its context is
disjoint from its preset and postset.
\medskip

A RPN is represented as a graph with two types of nodes: places
(circles) and transitions (rectangles). Presets are represented by arrows from
places to transitions, postsets by arrows from transitions to places, and
contexts by undirected edges, called read arcs, between places and transitions.
The initial marking is represented by tokens in places.
Fig.~\ref{fig:interval} shows an example of RPN.
The transition $a$, for instance, has $p_1$ in its preset, $p_2$ in its context
and $p_4$ in its postset.

\subsection{Atomic Semantics}

A \emph{marking} of a safe RPN is a set \(M \subseteq P\) of
marked places.
A Petri net starts in its {\em initial marking} \(M_0\).
A transition $t \in T$ is {\em enabled} in a marking \(M\) if all the places of
its preset and context are marked, i.e., \(\pre{\cont{t}} \subseteq M\).
Then $t$ can {\em fire} from \(M\), leading to the marking
\(M' \eqdef (M \setminus \pre{t}) \cup \post{t}\).
In this case, we write $M\paite{\netn, t}M'$ or simply $M\paite{\netn}M'$.

As we consider only \emph{safe} RPNs, we assume
that if a transition $t \in T$ is enabled in a marking \(M\), then \((M
\setminus \pre{t}) \cap \post{t} = \emptyset\).

\begin{definition}[Atomic semantics, a-run]
  We call {\em firing sequence of \(\netn\) under the atomic semantics}, or
  \emph{a-run}, any sequence \(\sigma \eqdef (t_1 \dots t_n)\) of transitions
  for which there exist markings \(M_1, \dots, M_n\) such that for all \(i \in
  \{1, \dots, n\}\), firing $t_i$ from $M_{i - 1}$ is possible and leads to
  $M_i$.
\end{definition}


For instance, the net in Fig.~\ref{fig:interval} has two possible firing
sequences: $(a)$ and $(bc)$. However, $d$ can never fire because
that would require to fire both $a$ and $b$ first, and firing
one of $a,b$ disables the other.

\subsection{Non-atomic Semantics}
\label{sec:non-atomic}

In this section, we discuss two semantics for concurrent firing of multiple
transitions. One is the well-known \emph{step semantics}
\cite{DBLP:journals/tcs/JanickiK93}, in which multiple
transitions can fire simultaneously. This is typically the case of \(a\) and
\(b\) in the net of Fig.~\ref{fig:interval}, which are both enabled
and have disjoint presets, but cannot fire together according to the atomic
semantics.
The step semantics can be interpreted as first checking whether all members of a set of transitions
can fire, and then firing them simultaneously.
Intuitively, the step semantics is somehow similar to the general asynchronous updating as it
considers any set of fireable transitions; whereas the maximal step semantics which considers only
maximal sets of fireable transitions is analoguous to the synchronous updating.
We then recall the \emph{interval semantics} introduced in \cite{CHKS-pn15}, which allows a more
liberal choice of checking and firing transitions in a set.

We present the semantics under the assumption that
the underlying net is safe even under these two semantics, which allow more
possibilities than the atomic one.

\subsubsection{Step Semantics}
\label{sec:steps}

We first recall the step semantics \cite{DBLP:journals/tcs/JanickiK93}.

\begin{definition}[Step semantics, s-run]
  Let $N$ be a RPN.
  A \emph{step} is a \emph{set} \(S\) of transitions of \(N\). It can fire from
  configuration \(M\) and lead to configuration \(M'\), written \(M
  \psite{\netn, S} M'\) or simply \(M \psite{\netn} M'\), if
  \begin{itemize}
  \item every $t \in S$ is enabled in $M$,
  \item the presets of the transitions in \(S\) are disjoint, and
  \item \(M' = \left(M \setminus \bigcup_{t \in S}\pre{t}\right)
    \cup \bigcup_{t \in S}\post{t}\).
  \end{itemize}
  We call {\em s-run} of \(N\) any sequence \(\sigma \eqdef (S_1 \dots S_n)\) of
  \emph{steps} for which there exist markings \(M_1, \dots, M_n\) such that for
  all \(i \in \{1, \dots, n\}\), step \(S_i\) can fire from $M_{i - 1}$ and
  leads to $M_i$.
\end{definition}

A variant of step semantics, called maximal step semantics has received interest
in the literature~\cite{DBLP:journals/tcs/JanickiLKD86,DBLP:conf/forte/CourtiatS94}.
\begin{definition}[Maximal step semantics]
  The firing rule for the maximal step semantics is defined as \(M \pmite{\netn,
    S} M'\) (or simply \(M \pmite{\netn} M'\)) iff \(M \psite{\netn, S} M'\) and
  no larger step \(S' \supsetneq S\) can fire from \(M\).
\end{definition}

In the example of Fig.~\ref{fig:interval}, the step semantics allows one to
fire \(a\) and \(b\) in one step since they are both enabled in the initial
marking and \(\pre{a} \cap \pre{b} = \emptyset\). This gives the s-run \((\{a,
b\})\) in addition to the others which were already possible under the atomic
semantics; for instance the a-run involving \(b\) followed by \(c\), denoted \((b
c)\) for the atomic semantics, is simply rewritten as the s-run \((\{b\}
\{c\})\) under the step semantics. However, transition $d$ remains dead
since none of these s-runs contains all of $a$, $b$, and $c$.

The intuitive model underlying the step semantics is that
all the transitions in the step can first check, in any order, whether they
are enabled and not in conflict with one another, i.e., their presets are disjoint.
Once the checks have been performed, they can all fire, again in any order. Put differently, if we
denote
the checking phase of a transition $t$ by $t^-$ and its firing phase by $t^+$,
then every step consists of any permutation of the actions of type $t^-$
(for all transitions $t$ in the step), followed by any permutation of
the actions $t^+$. The notion introduced in Def.~\ref{def:split}
formalizes this intuition.

\begin{definition}[
    s\textsuperscript{$\pm$}-run]
\label{def:split}
  For every s-run \((T_1 \dots T_n)\) of a RPN \(N\), every
  concatenation \(u_1^-.u_1^+.\cdots.u_n^-.u_n^+\) of sequences \(u_i^-\) and
  \(u_i^+\), is an \emph{s\textsuperscript{$\pm$}-run} of \(N\), where every
  \(u_i^-\) is a permutation of the set \(\{t^- \mid t \in T_i\}\) and every
  \(u_i^+\) is a permutation of the set \(\{t^+ \mid t \in T_i\}\) (where
  \(T_i\) is a set of transitions of \(N\)).
\end{definition}

For example, the s-run \((\{b\} \{c\})\) yields the s\textsuperscript{$\pm$}-run
\((b^- b^+ c^- c^+)\) and the s-run \((\{a, b\})\) yields four
s\textsuperscript{$\pm$}-runs: \((a^- b^- a^+ b^+)\), \((a^- b^- b^+ a^+)\),
\((b^- a^- a^+ b^+)\) and \((b^- a^- b^+ a^+)\).

\subsubsection{Splitting Transitions for Understanding Steps}

Def.~\ref{def:split} formalizes a semantics of RPNs in which
the firing of a transition does not happen directly, but in two steps,
the checking of the pre-conditions and the actual execution.
In this section, we generalize this idea.

The left-hand side of Fig.~\ref{fig:splitting} shows a part of
the net in Fig.~\ref{fig:interval}, which consists of transition
$a$ with its preset $\{p_1\}$, context $\{p_2\}$,
and postset $\{p_4\}$. The construction on the right-hand side of
\ref{fig:splitting} illustrates the idea of splitting firing transitions
into two phases:
\begin{itemize}
\item every transition \(t\) is split into \(t^-\) and \(t^+\);
\item every place \(p\) is duplicated to \(p^c\) (meaning token in \(p\)
  available for consumption) and \(p^r\) (meaning token in \(p\) available
  for reading).
\end{itemize}
Similar ideas about splitting transitions can be found in several works, for
instance in \cite{DBLP:journals/ipl/Vogler95}.

Intuitively, if we apply this construction to all transitions from
Fig.~\ref{fig:interval}, then the s\textsuperscript{$\pm$}-runs of that
net correspond to a-runs of the newly constructed net. The following
Def.~\ref{def:splitting} provides the precise details of the
construction.

\begin{definition}[\boldmath\(\splitting(N)\)]
\label{def:splitting}
  Given a RPN \(N = (P, T, \mi{pre}, \mi{cont}, \mi{post},
  M_0)\), \(\splitting(N) \eqdef (P', T',
  \mi{pre}', \mi{cont}', \mi{post}', M_0')\) is the RPN where
  \begin{itemize}
  \item
    \(T'\) contains two copies, denoted \(t^-\) and \(t^+\) of every transition
    \(t \in T\).
  \item
    \(P'\) contains two copies, denoted \(p^c\) and \(p^r\) of every place
    \(p \in P\), plus one place \(p_t\) per transition \(t \in T\).
  \item \(\pre{t^-} \eqdef \{p^c \mid p \in \pre{t}\}\)
  \item \(\cont{t^-} \eqdef \{p^r \mid p \in \cont{t}\}\)
  \item \(\post{t^-} \eqdef \{p_t\}\)
  \item \(\pre{t^+} \eqdef \{p^r \mid p \in \pre{t}\} \cup \{p_t\}\)
  \item \(\cont{t^+} \eqdef \emptyset\)
  \item \(\post{t^+} \eqdef \{p^c \mid p \in \post{t}\} \cup \{p^r \mid p \in \post{t}\}\}\)
  \item \(M_0' \eqdef \{p^c \mid p \in M_0\} \cup \{p^r \mid p \in M_0\}\)
  \end{itemize}
\end{definition}

\begin{figure}[tb]
  \centering
  \def\b{1.2}
\def\c{0.6}
\def\delta{8}

\begin{tikzpicture}[>=stealth,shorten >=1pt,node distance=\c cm,auto]


  \node[state] (p1) at (0*\b, 0*\c) [label=above:\text{$p_1$}] {};
  \node[state] (p2) at (2*\b, 0*\c) [label=above:\text{$p_2$}] {};
  \node[state] (p3) at (0*\b, -6*\c) [label=below:\text{$p_4$}] {};

  \node[transition] (t) at (0*\b, -3*\c) {$a$};

  \path[->] (p1) edge (t);
  \path[-] (p2) edge [bend left=20] (t);
  \path[->] (t) edge (p3);


  \node[state] (p1c) at (-0.2*\b + \delta, 0*\c) [label=above:\text{$p_1^c$}] {};
  \node[state] (p1r) at (0.2*\b + \delta, 0*\c) [label=above:\text{$p_1^r$}] {};

  \node[state] (p2c) at (1.8*\b + \delta, 0*\c) [label=above:\text{$p_2^c$}] {};
  \node[state] (p2r) at (2.2*\b + \delta, 0*\c) [label=above:\text{$p_2^r$}] {};

  \node[state] (pt) at (0*\b + \delta, -3*\c) [label=left:\text{$p_a$}] {};

  \node[state] (p3c) at (-0.2*\b + \delta, -6*\c) [label=below:\text{$p_4^c$}] {};
  \node[state] (p3r) at (0.2*\b + \delta, -6*\c) [label=below:\text{$p_4^r$}] {};

  \node[transition] (tm) at (0*\b + \delta, -1.9*\c) {$a^-$};
  \node[transition] (tp) at (0*\b + \delta, -4.1*\c) {$a^+$};

  \path[->] (p1c) edge (tm);
  \path[-] (p2r) edge [bend left=15] (tm);
  \path[->] (tm) edge (pt);

  \path[->] (p1r) edge [bend left=30] (tp);
  \path[->] (pt) edge (tp);
  \path[->] (tp) edge (p3c);
  \path[->] (tp) edge (p3r);
\end{tikzpicture}
  \caption{The splitting of transition \(a\) (left) into \(a^-\) and
    \(a^+\) (right).}
  \label{fig:splitting}
\end{figure}
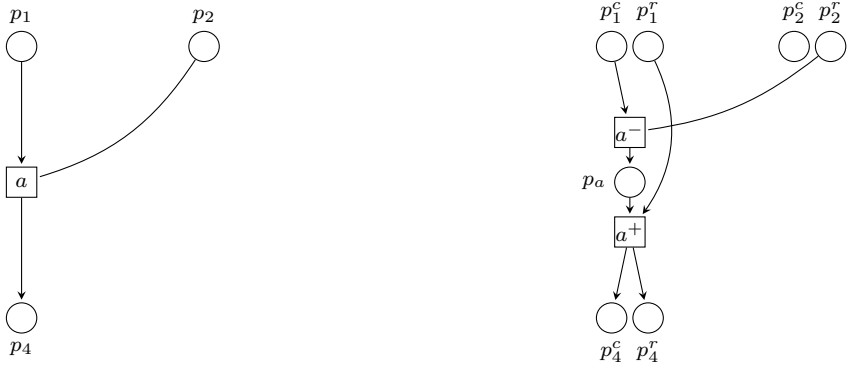

We now formally prove the intuition mentioned above:

\begin{lemma}\label{lem:s-run}
  Every s\textsuperscript{$\pm$}-run \(\sigma^{\pm}\) of \(N\) is an a-run of
  \(\splitting(N)\).
  Moreover \(\sigma^{\pm}\) reaches the marking \(\{p^c \mid p \in M\} \cup
  \{p^r \mid p \in M\}\), where \(M\) is the
  marking of \(N\) reached after the s-run \(\sigma\) from which
  \(\sigma^{\pm}\) is obtained.
\end{lemma}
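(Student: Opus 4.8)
The plan is to proceed by induction on the number $n$ of steps of the underlying s-run $\sigma = (T_1 \dots T_n)$, strengthening the statement into the invariant that the prefix $u_1^-.u_1^+.\cdots.u_i^-.u_i^+$ is an a-run of $\splitting(N)$ reaching the marking $\mu_i \eqdef \{p^c \mid p \in M_i\} \cup \{p^r \mid p \in M_i\}$, where $M_i$ is the marking of $N$ after the step-run $(T_1\dots T_i)$. The base case $n=0$ is immediate: the empty s$^\pm$-run is (vacuously) an a-run and $M_0' = \mu_0$ by the definition of $\splitting(N)$. For the inductive step I would take the prefix through $\mu_{i-1}$ as already realised, and analyse the two phases $u_i^-$ and $u_i^+$ separately, exploiting that, since $T_i$ fires from $M_{i-1}$ under the step semantics, every $t \in T_i$ satisfies $\precont{t} \subseteq M_{i-1}$ and the presets $\{\pre{t}\}_{t\in T_i}$ are pairwise disjoint.

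First I would show that, from $\mu_{i-1}$, any permutation $u_i^-$ of $\{t^- \mid t\in T_i\}$ is a legal a-run. Each $t^-$ requires $\{p^c\mid p\in\pre{t}\}$ and $\{p^r\mid p\in\cont{t}\}$, both present in $\mu_{i-1}$ because $\precont{t}\subseteq M_{i-1}$; firing $t^-$ only consumes the consumption-copies $p^c$ of $\pre{t}$ and deposits a token in the fresh place $p_t$. Since the presets are disjoint, distinct $t^-$ consume disjoint sets of $p^c$-places, and since contexts are read on the separate read-copies $p^r$, no $t^-$ can disable another; hence $u_i^-$ fires in any order and reaches $\{p^c \mid p\in M_{i-1}\setminus\bigcup_{t\in T_i}\pre{t}\} \cup \{p^r\mid p\in M_{i-1}\} \cup \{p_t\mid t\in T_i\}$.

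Next I would treat $u_i^+$ from that intermediate marking. Each $t^+$ requires $\{p^r\mid p\in\pre{t}\}\cup\{p_t\}$: the read-copies $p^r$ of $\pre{t}$ survived phase $u_i^-$ untouched (they were only ever read, never consumed), and $p_t$ was just produced, so every $t^+$ is enabled. Disjointness of presets together with the freshness of the $p_t$ again makes the $t^+$ pairwise non-conflicting, so $u_i^+$ fires in any order. A direct set computation, using $M_i = (M_{i-1}\setminus\bigcup_{t\in T_i}\pre{t})\cup\bigcup_{t\in T_i}\post{t}$ from the step firing rule, then shows the resulting markings on the $p^c$- and $p^r$-copies are $\{p^c \mid p\in M_i\}$ and $\{p^r\mid p\in M_i\}$, while all $p_t$ are consumed; this is exactly $\mu_i$, closing the induction.

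I expect the only delicate point to be the order-independence within each phase, i.e.\ arguing that firing some of the $t^-$ (resp.\ $t^+$) cannot disable the remaining ones. This is where the two ingredients combine: the disjointness of presets guaranteed by the step semantics, and the decoupling of \emph{consume} ($p^c$) from \emph{read} ($p^r$) built into $\splitting(N)$, which ensures that consuming a place in one transition's preset never removes a read-token another transition relies on. Everything else reduces to the routine set algebra sketched above, and the purely set-theoretic firing rule keeps that bookkeeping transparent.
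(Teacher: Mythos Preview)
Your proposal is correct and follows essentially the same approach as the paper: induction on the number of steps of the underlying s-run, with the inductive step split into the $u_i^-$ phase (using $\precont{t}\subseteq M_{i-1}$ and pairwise disjointness of the $\pre{t}$ to fire all $t^-$ in any order) and the $u_i^+$ phase (using that the $p^r$-copies and the fresh $p_t$ are still present, again with disjoint presets). Your write-up is simply more explicit than the paper's about why no transition in a given phase can disable another, which is exactly the point the paper handles more tersely.
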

\begin{proof}
  We proceed by induction on the length of \(\sigma\). The case \(\sigma = ()\)
  is trivial. Now, let \(\sigma^{\pm} = u_1^-.u_1^+.\cdots.u_n^-.u_n^+\) be an
  s\textsuperscript{$\pm$}-run obtained from an s-run \(\sigma = (T_1 \dots
  T_n)\), assume the property true for
  \(u_1^-.u_1^+.\cdots.u_{n-1}^-.u_{n-1}^+\) and denote \(M_{n-1}\) the marking
  reached after \((T_1 \dots T_{n-1})\). By induction hypothesis,
  \(u_1^-.u_1^+.\cdots.u_{n-1}^-.u_{n-1}^+\) reaches the marking \(\{p^c \mid p
  \in M_{n-1}\} \cup \{p^r \mid p \in M_{n-1}\}\)
  of \(\splitting(N)\). The fact that \(T_n\) is a valid step from \(M_{n-1}\)
  implies that \(\bigcup_{t \in T_n}\precont{t} \subseteq M_{n-1}\) and that the
  presets of the transitions in \(T_n\) are disjoint. This allows one to fire
  all the \(t^-\), \(t \in T_n\) in any order and reach the marking \(\{p^c \mid p
  \in M_{n-1} \setminus \bigcup_{t \in T_n}\pre{t}\} \cup \{p^r \mid p \in
  M_{n-1}\} \cup \{p_t \mid t
  \in T_n\}\) of \(\splitting(N)\). Now the \(t^+\), \(t \in T_n\), are all
  enabled and their presets are disjoint. They can in turn be fired in any
  order, reaching the desired marking of \(\splitting(N)\).\qed
\end{proof}

Note that the converse of Lemma~\ref{lem:s-run} does not hold. For instance, for
the net $N$ from Fig.~\ref{fig:interval}, the net $\splitting(N)$ admits the
a-run $a^-b^-b^+c^-c^+a^+$, which is not an s\textsuperscript{$\pm$}-run of $N$.

\subsubsection{Interval Semantics}
\label{sec:interval-semantics}

We have seen that the construction \(\splitting(N)\) admits firing sequences
that cannot be mapped back to executions under either the atomic or the
step semantics. In this section, we shall introduce the
\emph{interval semantics}, which is more general than the step semantics,
and whose interpretation on a net $N$ does correspond to the feasible
executions in $\splitting(N)$.

\begin{definition}[Interval semantics, i-run]
  Every a-run of \(\splitting(N)\) is called \emph{i-run} of \(N\), or run of
  \(N\) under the interval semantics.
\end{definition}

Coming back to the example of Fig.~\ref{fig:interval}, transition \(d\) can fire
under the interval semantics, for instance after the i-run \(a^- b^- b^+ c^- c^+
a^+ d^- d^+\) where transitions \(b\) and \(c\) complete the firing during the
period in which \(a\) fires.
Under the atomic semantics, \(a\) and \(b\) are in conflict, which prevents
\(d\) from firing. Under the step semantics, \(a\) and \(b\) can fire in the
same step, but then \(c\) cannot fire. Under the interval semantics,
$d$ can also fire.



Recall that we introduced $t^-$ and $t^+$ to represent different phases
during the execution of transition $t$. An obvious question is whether
the new semantics can lead to runs in which a transition `gets stuck'
during its execution. The following Lemma~\ref{lem:complete} affirms that
this is not the case: once $t^-$ is fired, nothing can hinder $t^+$ from
firing too.

\begin{definition}[complete marking]
  A marking of \(\splitting(N)\) is \emph{complete} if no \(p_t\) is marked.
\end{definition}
In particular, the initial marking is complete.

\begin{definition}[complete i-run]\label{def:complete-irun}
  An i-run is \emph{complete} if for each transition \(t^-\) in it, it includes the corresponding
  transition \(t^+\).
\end{definition}

\begin{lemma}
\label{lem:complete}
  Every i-run can be completed: for every i-run \(\sigma\), there exists a
  suffix \(\mu\) which matches all the unmatched \(t^-\), and such that
  \(\sigma\mu\) is an i-run.
  Moreover, complete i-runs (and only them) lead to complete markings.
\end{lemma}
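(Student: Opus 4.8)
The plan is to reduce both claims to a single structural invariant that characterises exactly which markings of \(\splitting(N)\) are reachable, expressed in terms of an underlying marking of \(N\) together with the set of transitions currently ``in progress''. First I would associate to every reachable marking \(M'\) of \(\splitting(N)\) the in-progress set \(I \eqdef \{t \in T \mid p_t \in M'\}\) and the induced marking \(M \eqdef \{p \in P \mid p^r \in M'\}\). The invariant I aim to prove, by induction on the length of the i-run (i.e.\ of the a-run of \(\splitting(N)\)) reaching \(M'\), reads:
(i) \(p^r \in M'\) iff \(p \in M\);
(ii) \(p^c \in M'\) iff \(p \in M\) and \(p \notin \bigcup_{t \in I}\pre{t}\);
(iii) the presets \(\{\pre{t} \mid t \in I\}\) are pairwise disjoint and \(\precont{t} \subseteq M\) for every \(t \in I\).

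The base case is \(M' = M_0'\), where \(I = \emptyset\), \(M = M_0\), and all three clauses hold by the definition of \(M_0'\). For the inductive step I would treat the firing of a \(t^-\) and of a \(t^+\) separately. Firing \(t^-\) needs \(p^c \in M'\) for each \(p \in \pre{t}\) and \(p^r \in M'\) for each \(p \in \cont{t}\); by clause (ii) the former forces \(\pre{t}\) to be disjoint from every preset already reserved by \(I\), so adding \(t\) to \(I\) preserves (iii), while (i) is untouched and (ii) is restored because precisely the tokens \(p^c\), \(p \in \pre{t}\), are removed; the containment \(\precont{t} \subseteq M\) records the enabling condition. Firing \(t^+\) consumes \(p_t\) and the \(p^r\), \(p \in \pre{t}\), and produces \(p^c, p^r\) for \(p \in \post{t}\), thereby removing \(t\) from \(I\) and transforming \(M\) exactly as the firing of \(t\) transforms a marking of \(N\); safety of \(N\) then keeps (i)--(iii) intact. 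The delicate point here, and the main obstacle, is clause (iii): I must argue that two in-progress transitions can never share a preset place, which is exactly what guarantees both that the \(p^r\)-tokens read by distinct \(t^+\) never collide and that safety is maintained. This follows from (ii), since a place whose \(p^c\) has been consumed by one in-progress transition is unavailable to enable a second \(t^-\) on the same place.

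With the invariant established, the first claim is immediate. Given an i-run \(\sigma\) reaching \(M'\) with in-progress set \(I\), I would fire the transitions \(\{t^+ \mid t \in I\}\) in an arbitrary order. Each \(t^+\) is enabled when reached: \(p_t \in M'\) because \(t \in I\), and \(p^r \in M'\) for every \(p \in \pre{t}\) by (i) together with \(\precont{t} \subseteq M\) from (iii). By the disjointness in (iii) the presets \(\pre{t^+}\) of these transitions are pairwise disjoint, so firing one never disables another and safety is preserved. Letting \(\mu\) be the resulting sequence, \(\sigma\mu\) is an i-run in which every \(t^-\) has its matching \(t^+\), i.e.\ a complete i-run.

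For the ``moreover'' part I would argue by token counting on the places \(p_t\). Each \(t^-\) deposits a token in \(p_t\) and each \(t^+\) removes it, and since \(\splitting(N)\) is safe, along every prefix of an i-run the number of occurrences of \(t^-\) minus that of \(t^+\) lies in \(\{0,1\}\) and equals the number of tokens in \(p_t\). Hence the final marking is complete, i.e.\ no \(p_t\) is marked, exactly when the two counts coincide for every \(t\), which is precisely the requirement that every \(t^-\) be matched by a corresponding \(t^+\) in the sense of Definition~\ref{def:complete-irun}. This yields both directions of the equivalence and completes the argument.
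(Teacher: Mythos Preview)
Your approach is sound, but clause~(iii) of your invariant is slightly too strong as stated: the containment \(\precont{t}\subseteq M\) need not survive later firings. If \(t,s\in I\) with some \(p\in\cont{t}\cap\pre{s}\) (nothing forbids this), then firing \(s^+\) removes \(p^r\) and hence \(p\) from \(M\) while \(t\) remains in \(I\). The fix is immediate: replace \(\precont{t}\) by \(\pre{t}\) in~(iii). That weaker form is all you actually use when arguing that each pending \(t^+\) is enabled, and it \emph{is} preserved, precisely because the disjointness part of~(iii) prevents any \(s^+\) with \(s\in I\setminus\{t\}\) from consuming a token \(p^r\) with \(p\in\pre{t}\).

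By contrast, the paper's proof is a two-line argument: it simply asserts that once \(t^-\) has fired, no other transition consumes the tokens in \(\pre{t^+}\), so the unmatched \(t^+\) may be fired in any order. Your invariant makes explicit \emph{why} that assertion holds (the \(p^c\)-tokens enforce disjoint presets among in-progress transitions), which the paper leaves to the reader. So your route is longer but more transparent, and your token-counting treatment of the ``moreover'' clause is likewise more explicit than the paper, which does not spell that direction out at all.
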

\begin{proof}
  As long as a \(t^-\) is unmatched, \(\pre{t^+}\) remains included in the
  marking: no other transition consumes these tokens. Hence, it suffices to fire
  all the \(t^+\) corresponding to the unmatched \(t^-\), in any order.\qed
\end{proof}

Now, relating \(\splitting(N)\) with the original net \(N\), we map naturally
every marking \(M\) of \(N\) to the complete marking \(M'\) of \(\splitting(N)\)
defined as \(M' \eqdef \{p^c \mid p \in M\} \cup \{p^r \mid p \in M\}\). We get
of course that
\[M_1 \paite{\netn, t} M_2 \implies
M'_1 \paite{\netn, t^-} \paite{\netn, t^+} M'_2\,,\]
but in general the interval semantics induces more runs: for all markings
\(M_1\) and \(M_2\) of \(N\), we
write \(M_1\piite{\netn}\closure M_2\) when \(M'_1 \paite{\netn}\closure M'_2\).

\section{Encodings}
\label{sec:encodings}
\subsection{Coding Boolean Networks in safe Read Petri nets}
\label{sec:bn2cpn}

The translation of BNs into safe Petri nets has been addressed in the literature (e.g.
\cite{DBLP:journals/bib/Chaouiya07,ChaouiyaNRT11,CHJPS14-CMSB,DBLP:conf/apn/ChaouiyaRRT04}).
We provide here a similar encoding of BNs into safe RPNs, with the explicit specification of the context
of transitions, and with notations that will be used in Sect.~\ref{sec:sensitivity}.
The encoding can be easily generalized to multi-valued networks to safe RPNs, following
\cite{CHJPS14-CMSB,gored-TCBB}.

BNs translate into a special type of RPNs:
\bit
\item \emph{complemented}: for every place $\placen$ there is exactly one distinct place $\non{\placen}$ such that 
\bys
\preset{\placen}=\postset{\non{\placen}}~\land~\postset{\placen}=\preset{\non{\placen}}~\land~\forall \transn\in\trans:~\placen\in\context{\transn}\Rightarrow \non{\placen}\not\in\context{\transn}; 
\eys
\item \emph{Boolean}:
there is a surjection $\vari:\places\to\range n$ such that
\bys\forall p,p'\in\places:~
\vari(p)=\vari(p')&\Leftrightarrow& p'\in \{p,\non{p}\},
\eys 
and, subsequently, a mapping $\vali:\places\to\B$ which satisfies
\bys
\forall p\in\places:\vali(p)+\vali(\non p) = 1.
\eys
Moreover, any reachable marking $M$ satisfies
\bys
\forall p\in P: p\in M\Leftrightarrow \non p\notin M.
\eys
\item \emph{transition dichotomy}: every transition $\transn\in\trans$ has exactly one input place
$\placen$ and one output place $\bar{\placen}$. If $\vali(\placen)=0$ then call $\transn$ the \emph{up-transition} 
$\upt(\vari(\placen))$ of $\vari(\placen)$, otherwise the \emph{down-transition} $\downt(\vari(\placen))$ of $\vari(\placen)$.
\eit
Let us consider a BN $f$ of dimension $n$.
Each component $\varn\in\range n$ is modeled as two places $\varn_0$ and $\varn_1$ representing the two values
possible  for $\varn$.
Then a Petri net transition $\varn^+$ is defined
for each conjunctive clause of the disjunctive normal form of $(\neg x_\varn \wedge f_{\varn}(x))$.
Such a transition consumes a token in the place $\varn_0$ and produces a token in the place
$\varn_1$, and its context is formed by the places corresponding to the literals of the conjunction
other than $\neg x_{\varn}$:
for each component $\varn'\in\range n$, $\varn'\neq\varn$,
if the clause contains $x_{\varn'}$, the context contains the place ${\varn'_1}$;
if the clause contains $\neg x_{\varn'}$, the context contains the place ${\varn'_0}$.
A transition $\varn^-$ is defined similarly
for each conjunctive clause of the disjunctive normal form of $(x_\varn \wedge \neg f_{\varn}(x))$,
such that
\bys
\preset{(\varn^+)}=\postset{(\varn^-)}=\{\varn_0\} &\mathit{and}&
\preset{(\varn^-)}=\postset{(\varn^+)}=\{\varn_1\},
\eys

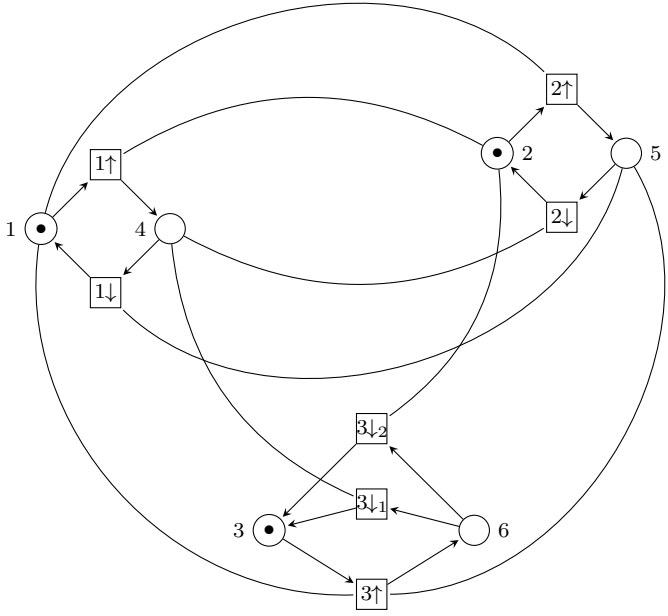
\begin{figure}[t]
\centering
\def\b{2}
\def\c{1.2}

\begin{tikzpicture}[>=stealth,shorten >=1pt,node distance=\c cm,auto]
\node[state] (p10) [label=left:$1$] {$\bullet$};
\node[transition, above right of=p10] (t1p) {$1\UP$};
\node[transition, below right of=p10] (t1m) {$1\DW$};
\node[state,below right of=t1p] (p11) [label=left:$4$] {};
\path[->] (p10) edge (t1p) (t1p) edge (p11) (p11) edge (t1m) (t1m) edge (p10);
\begin{scope}[xshift=6cm,yshift=1cm]
\node[state] (p20) [label=right:$2$] {$\bullet$};
\node[transition, above right of=p20] (t2p) {$2\UP$};
\node[transition, below right of=p20] (t2m) {$2\DW$};
\node[state,below right of=t2p] (p21) [label=right:$5$] {};
\end{scope}
\path[->] (p20) edge (t2p) (t2p) edge (p21) (p21) edge (t2m) (t2m) edge (p20);
\path[-]
    (p20) edge[bend right=30] (t1p)
    (p11) edge[bend right=30] (t2m)
    (p21) edge[bend left=60] (t1m)
    (p10) edge[bend left=60] (t2p)
;

\begin{scope}[xshift=3cm,yshift=-4cm]
\node[state] (p30) [label=left:$3$] {$\bullet$};
\node[transition,below right of=p30,xshift=5mm] (t3p) {$3\UP$};
\node[state,above right of=t3p,xshift=5mm] (p31) [label=right:$6$] {};
\node[transition,above right of=p30,xshift=5mm,yshift=-5mm] (t3m1) {$3\DW_1$};
\node[transition,above right of=p30,xshift=5mm,yshift=5mm] (t3m2) {$3\DW_2$};
\path[->] (p30) edge (t3p) (t3p) edge (p31)
 (p31) edge (t3m1) edge (t3m2)
 (t3m1) edge (p30)
 (t3m2) edge (p30)
;
\path[-]
    (p10) edge[bend right=50] (t3p)
    (p21) edge[bend left=60] (t3p)
    (p11) edge[bend right] (t3m1)
    (p20) edge[bend left] (t3m2)
 ;
\end{scope}

\end{tikzpicture}
\caption{RPN encoding of the BN of Fig.~\ref{fig:example}
$\langle f_1(x)=\neg x_2, f_2(x)=\neg x_1, f_3(x)=\neg x_1\wedge x_2\rangle$
and configuration $000$}
\label{fig:example2cpn}
\end{figure}

Fig.\ref{fig:example2cpn} shows the translation of the BN of Fig.~\ref{fig:example} into RPN.

Hereafter, Def.~\ref{def:bn2cpn} gives a formalization of this encoding, and
Theorem~\ref{thm:bn2cpn} states its correctness with respect to the asynchronous, synchronous,
generalized asynchronous updating modes, and
RPN atomic, maximal step, and step semantics, respectively.
Given a Boolean formula $F$, we write $\DNF[F]$ for the set of conjunctive clauses
in the disjunctive normal form of $F'$.
A clause $C\in\DNF[F]$ is then a set of literals, positives or negatives.
It is worth noticing that the resulting RPN can have a number of transitions exponential in the
number of literals in the Boolean functions.

\begin{definition}
Given a BN $f$ of dimension $n$ and a configuration $y$, $\cpnofbn f$ is the RPN $(P,T,\mi{pre},\mi{cont},\mi{post},M_0)$
such that
\begin{itemize}
\item $P = \{1, \dots, 2n\}$ are the places;
\item $T, \mi{pre},\mi{cont},\mi{post}$ are the smallest sets such that for each $i\in\range n$,
for each clause $C\in\DNF[\neg x_i\wedge f_i(x)]$ (resp. $C\in\DNF[x_i\wedge \neg f_i(x)]$),
there is a transition $t\in T$ such that
$\preset t = \{i\}$ (resp. $\preset t=\{i+n\}$), $\postset t=\{i+n\}$ (resp. $\postset t=\{i\}$),
and $\context t = \{ j \mid [\neg x_j]\in C, j\neq i\} \cup \{ j+n \mid [x_j]\in C, j\neq i\}$;
\item $M_0=\cpnofbn y$
\end{itemize}
where, for any configuration $x\in\B^n$,
$\cpnofbn x \DEF \{ i+nx_i \mid i \in \range n\}$
(e.g., 
$\cpnofbn {010} = \{ 1, 5, 3 \}$,
$\cpnofbn {101} = \{ 4, 2, 6 \}$).
\label{def:bn2cpn}
\end{definition}

\begin{theorem}
Given a BN $f$ of dimension $n$, for any configurations $x,y\in\B^n$,
\begin{align*}
x \aite{f} y &\Longleftrightarrow \cpnofbn x\paite{\cpnofbn f} \cpnofbn y\enspace,
\\
x \site{f} y &\Longleftrightarrow \cpnofbn x\pmite{\cpnofbn f} \cpnofbn y\enspace,
\\
x \ite{f} y &\Longleftrightarrow \cpnofbn x\psite{\cpnofbn f} \cpnofbn y\enspace.
\end{align*}
\label{thm:bn2cpn}
\end{theorem}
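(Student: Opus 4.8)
The plan is to set up a dictionary between the combinatorial data of the BN and that of the net $\cpnofbn f$, and then read off each of the three equivalences as a direct instance of the corresponding firing rule. The backbone is the observation that, for every configuration $x \in \B^n$, the marking $\cpnofbn x = \{i + nx_i \mid i \in \range n\}$ marks exactly one of the two places $\{i, i+n\}$ per component: place $i$ is marked precisely when $x_i = 0$, and place $i+n$ precisely when $x_i = 1$. Hence $x \mapsto \cpnofbn x$ is a bijection onto the markings that mark exactly one place per component, and every reachable marking of $\cpnofbn f$ is of this form, since firing only moves the token of a component $i$ between $i$ and $i+n$. All three claims then follow once I pin down which transitions are enabled at $\cpnofbn x$ and what firing them does.

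First I would prove the enabling lemma. For the up-transition $t$ built from a clause $C \in \DNF[\neg x_i \wedge f_i(x)]$, with $\preset t = \{i\}$, $\postset t = \{i+n\}$, and context encoding the literals of $C$ other than $\neg x_i$, the definition of enabledness at $\cpnofbn x$ unwinds (using the encoding of the context) to: $x_i = 0$ and $x$ satisfies every literal of $C$, i.e. $x$ satisfies $C$. Since a formula is true at $x$ iff some clause of its DNF is true at $x$, some up-transition for $i$ is enabled at $\cpnofbn x$ iff $x_i = 0$ and $f_i(x) = 1$; symmetrically, some down-transition is enabled iff $x_i = 1$ and $f_i(x) = 0$. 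In either case firing removes the token from the preset place and deposits it in the postset place, producing $\cpnofbn{x'}$ where $x'$ flips component $i$ to $f_i(x)$ and fixes all others. In short: an enabled transition for node $i$ exists at $\cpnofbn x$ iff $f_i(x) \neq x_i$, every such transition flips $i$ to $f_i(x)$, and no transition touches any component other than its own node.

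With this lemma the three equivalences are almost immediate, each direction obtained by running the analysis forwards and backwards. For the atomic case, a single firing $\cpnofbn x \paite{\cpnofbn f} \cpnofbn y$ is exactly a flip of one component $i$ to $f_i(x)$, i.e. $\Delta(x,y) = \{i\}$ with $y_i = f_i(x)$, which is the definition of $x \aite f y$. For the step case, I would note that the presets $\{i\}$ (up) and $\{i+n\}$ (down) of transitions for distinct nodes are automatically disjoint, and that at $\cpnofbn x$ at most one transition per node can be enabled (an up- and a down-transition for $i$ would need $x_i = 0$ and $x_i = 1$ respectively). A step firing from $\cpnofbn x$ is therefore exactly a choice of at most one enabled transition per node, and by the firing lemma it flips precisely the chosen nodes to their $f_i(x)$ values, matching $x \ite f y$. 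The crucial point is that contexts are not consumed, so several transitions may read the same place simultaneously; this is what lets distinct components update in one step while jointly reading a shared regulator, mirroring the generalized asynchronous rule. For the synchronous case, maximality forces the step to contain a transition for every node where one is enabled — otherwise one could be added without breaking preset disjointness, contradicting maximality — so the maximal step flips exactly $\{i : f_i(x) \neq x_i\}$, giving $y = f(x)$, which is $x \site f y$.

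The routine obstacle to watch is the idle case together with the precise DNF bookkeeping. The BN relations are irreflexive, whereas the empty step (and the empty, hence maximal, step at a synchronous fixpoint) yields $\cpnofbn x \psite{\cpnofbn f} \cpnofbn x$ and $\cpnofbn x \pmite{\cpnofbn f} \cpnofbn x$; so the step and maximal-step equivalences should be read for $x \neq y$, i.e. for nonempty steps, consistent with the convention forbidding idle transitions. The only genuinely technical ingredient is the enabling lemma's use of the identity between satisfaction of a clause of a DNF and truth of the whole formula, plus checking that the context of a transition faithfully encodes the remaining literals of its clause; once this is fixed, the disjointness and maximality arguments are purely combinatorial. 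I expect the maximality correspondence for the synchronous case to demand the most care, as it is the one place where one must reason about all nodes at once rather than node by node.
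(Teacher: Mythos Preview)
Your proposal is correct and rests on the same key observation as the paper's proof: for each $i$, some transition of $\cpnofbn f$ acting on node $i$ is enabled at $\cpnofbn x$ iff $f_i(x)\neq x_i$. The paper records precisely this fact as a one-line proof and leaves the three equivalences implicit; you spell them out, including the disjoint-preset and maximality arguments and the idle/empty-step caveat. One small imprecision: it is not true that at most one transition per node is \emph{enabled} (several DNF clauses of $\neg x_i\wedge f_i(x)$ may be satisfied simultaneously, yielding several enabled up-transitions for $i$), only that at most one can sit in a step because they share the preset $\{i\}$; your subsequent reasoning already uses preset disjointness, so the argument goes through unchanged once the phrasing is fixed.
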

\begin{proof}
For any $i\in\range n$,
$f_i(x)\neq x_i$ if and only if there exists a transition $t$ of $\cpnofbn f$
where $\preset{\context t}\subseteq \cpnofbn x$.
\qed
\end{proof}

\subsection{Coding Read Petri Nets in Boolean Networks}

We have given above a translation of BNs into (a special class of) RPNs.
The comparison of both models also leads us into the opposite direction.

In the following, fix a safe RPN $\netn=\cpnt$.
The BN associated to $\netn$ has $\card\places+\card\trans$ components, where the first
$\card\places$ components encode the marking of the corresponding places,
and the $\card\trans$ other components encode the occurring transitions.
Without loss of generality, we assume that places and transitions range over indexes from $1$ to
$\card P+\card T$, i.e., $P \cup T \equiv \range{\card P+\card T}$.
In order to simplify the encodings, we additionally assume the RPNs
to be \emph{loop-free}, i.e., for every transition \(t \in T\), \(\pre{t} \cap
\post{t} = \emptyset\). It is well known that loops can be replaced by read
arcs without any effect on the (atomic) semantics.

Transporting the \emph{dynamics}, i.e., the actual \emph{firing} of transitions, into the framework
of BNs constitutes the non-trivial part of the translation.
A RPN transition typically has more than one output place, while the functions in
BNs write on one single variable.
Our encoding decomposes the firing of a RPN transition into several updates of the BN.
Essentially, when components corresponding to the pre-condition and context of a transition $t$ are marked,
and if no other transition $t'$ is already occurring,
the $t$\nth component of the BN can be updated to $1$.
Then, the components related to the input and output places of $t$ are updated (in any order) to apply
their respective un-marking and marking.
Once all these components have been updated, the $t$\nth component is updated to $0$.

It results that
a transition $t$ is occurring, encoded by the value $1$ of the $t$\nth component, if and only if
either (i) no transition is occurring, and all  components corresponding to places in the
pre-condition and context of $t$ have value $1$,
or (ii) $t$ is already occurring and at least one input (resp. output) place has not been unmarked
(resp. marked) yet.
A component corresponding to a place $p$ has value $1$ if and only if either one of transition
producing $p$ is occurring, or if it has already value $1$ and none transition consuming it is
occurring.


Hereafter, Def.~\ref{def:cpn2bn} provides a formalization of the encoding of a safe RPN into a
BN and Theorem~\ref{thm:cpn2bn} states its correctness in the scope of the asynchronous updating
(atomic);
note that the correctness also holds for the generalized asynchronous (step semantics) and
synchronous (maximal step semantics) updating.
\begin{definition}
Given a safe loop-free RPN $\netn = \cpnt$,
$\bnofcpn{\netn}$ is the BN of dimension $\card P+\card T$ such that
\begin{align*}
\forall p\in P,
\bnofcpn{\netn}_{p}(x) &=
    \left(\bigvee_{t\in \preset p} x_{t}\right)
    \vee
    \left(x_{p} \wedge \bigwedge_{t\in\postset p} \neg x_{t}\right)
\\
\forall t\in T,
\bnofcpn{\netn}_{t}(x) &=
    \left(\bigwedge_{p\in\preset {\context t}} x_{p}
    \wedge
    \bigwedge_{t'\in T} \neg x_{t'}\right)
\\
    &\vee
\left(
x_{t} \wedge 
\left(\bigvee_{p\in\postset t} \neg x_{p}
\vee
\bigvee_{p\in\preset t} x_{p}\right)
\right)
\end{align*}

Given a marking $M\subseteq P$ of $\netn$,
the corresponding configuration of $\bnofcpn{\netn}$ is
$\bnofcpn{M} \in \B^n$ where
$\forall p\in M, \bnofcpn{M}_{p} = 1$,
$\forall p\in P\setminus M, \bnofcpn{M}_{p} = 0$,
and
$\forall t\in T, \bnofcpn{M}_{t} = 0$.
\label{def:cpn2bn}
\end{definition}
As an example, let us consider the RPN of Fig.~\ref{fig:splitting}(left), which consists 3 places
$p_1$, $p_2$, $p_4$, and one transition $a$, such that
$\preset a = \{p_1\}$,
$\postset a =\{p_4\}$,
and $\context a = \{p_2\}$.
The above encoding into BN leads to 4 Boolean functions:
\begin{align*}
    f_{p_1}(x) &= x_{p_1} \wedge \neg x_a
\qquad
f_{p_2}(x) = x_{p_2}
\qquad
f_{p_4}(x) = x_a \vee x_{p_4}
\\
f_a(x) &= (x_{p_1} \wedge x_{p_2})
\vee (x_a \wedge (\neg x_{p_4} \vee x_{p_1}))
\end{align*}

\begin{theorem}
For a safe RPN $\netn=\cpnt$,
and any pair of markings $M,M'\subseteq P$, one has
\[
M \paite{\netn}\closure M'
\Longleftrightarrow
\bnofcpn{M} \areach{\bnofcpn{\netn}} \bnofcpn{M'}
\]
\label{thm:cpn2bn}
\end{theorem}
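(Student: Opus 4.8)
The plan is to prove both directions by a simulation that decomposes each atomic firing of $\netn$ into a block of asynchronous BN updates, controlled by an invariant on the shape of the reachable BN configurations. Call a configuration \emph{complete} if all its transition components equal $0$; note $\bnofcpn{M}$ is always complete. First I would show that a single step $M \paite{\netn, t} M'$ is realised in $\bnofcpn{\netn}$, starting from $\bnofcpn{M}$, by the block: (i) set component $t$ from $0$ to $1$; (ii) in any order, flip each $p\in\pre{t}$ from $1$ to $0$ and each $p\in\post{t}$ from $0$ to $1$; (iii) set $t$ back to $0$. Each update is checked legal by reading the functions of Def.~\ref{def:cpn2bn}: at (i), $\bnofcpn{\netn}_{t}(\bnofcpn{M})=1$ precisely because $\preset{\context t}\subseteq M$ (enabledness of $t$) and $\bigwedge_{t'\in T}\neg x_{t'}$ holds; during (ii), while exactly $t$ is on, the place functions force $\bnofcpn{\netn}_{p}=1$ on the not-yet-marked $p\in\post{t}$ and $\bnofcpn{\netn}_{p}=0$ on the not-yet-unmarked $p\in\pre{t}$ (loop-freeness, $\pre{t}\cap\post{t}=\emptyset$, ensures these flips are independent), while every other place keeps its value, so the flips stay enabled in any order; and at (iii) $\bnofcpn{\netn}_{t}=0$ holds exactly once all of $\post{t}$ are marked and all of $\pre{t}$ unmarked, i.e. once the marking is $M'$. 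Thus $\bnofcpn{M}\areach{\bnofcpn{\netn}}\bnofcpn{M'}$, and the forward implication follows by induction on the length of the a-run, concatenating blocks.

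For the converse I would first establish the key invariant: along any asynchronous run issued from a complete configuration, \emph{at most one} transition component is ever $1$, since turning $t$ on requires $\bigwedge_{t'\in T}\neg x_{t'}$, which fails as soon as one transition is on. Reading the functions again under this invariant shows that while a single $t$ is on: no other transition component can change (both disjuncts of $\bnofcpn{\netn}_{t'}$ vanish for $t'\neq t$), the only place components not pinned to their current value are the outputs of $t$ (forced towards $1$) and the inputs of $t$ (forced towards $0$), and $t$ can return to $0$ only when all of $\post{t}$ are marked and all of $\pre{t}$ unmarked. Moreover, at a complete configuration no place update is enabled at all, so from $\bnofcpn{M}$ the only available asynchronous moves are turn-ons of enabled transitions.

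These facts force the structure of any asynchronous run between complete configurations $\bnofcpn{M}$ and $\bnofcpn{M'}$: it splits into consecutive blocks, each opening with a turn-on of some enabled $t$, performing exactly the input/output flips of $t$, and closing with the turn-off of $t$; complete configurations occur precisely at the block boundaries (this is where the completeness phenomenon of Lemma~\ref{lem:complete} reappears in the BN setting), and each block realises one atomic firing $M_{k}\paite{\netn, t}M_{k+1}$. Reading the blocks off in order yields an a-run of $\netn$ from $M$ to $M'$, establishing $\Leftarrow$. The main obstacle is exactly this backward analysis: one must rule out that a run reaches a complete configuration through an unintended interleaving — in particular that, while $t$ is on, no place outside $\pre{t}\cup\post{t}$ changes and $t$ cannot switch off before its firing is complete — and it is the invariant together with the case analysis on the place and transition functions that secures this.
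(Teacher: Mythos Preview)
Your proposal is correct and is essentially the same argument as the paper's: simulate each atomic firing by the three-phase block (turn $t$ on, flip the input/output places, turn $t$ off), and for the converse use the invariant that at most one transition component can be $1$ at a time to force any asynchronous run between complete configurations to decompose into such blocks. Your write-up is in fact more careful than the paper's rather terse proof---you make the invariant and the block-induction explicit, whereas the paper leaves these largely to the reader---but the underlying strategy is identical.
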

\begin{proof}
If $M=M'$, the proof is trivial; in the following we consider $M\neq M'$.

($\Rightarrow$)
Let us assume that $M\paite{\netn} M'$. Then there exists $t\in T$ such that $\preset{\context t} \subseteq M$
and $M' = (M \setminus \preset t) \cup \postset t$.
Thus, $\bnofcpn{\netn}_{t}(\bnofcpn{M}) = 1$, and therefore, there exists $y\in\B^n$ such that
$x \aite{\bnofcpn{\netn}} y$ with $\Delta(x,y)=\{t\}$.
Then, assuming $\preset t\cap \postset t=\emptyset$,
for each place $p\in\preset t$, because $t\in\postset p$ and $y_{p}=1$,
$\bnofcpn{\netn}_{t}(y)=0$, and
for each place $p\in\postset t$, because $t\in\preset p$,
$\bnofcpn{\netn}_{p} = 1$.
Therefore, by updating the components $p$ for $p\in\preset t\cup \postset t$ in any ordering, we
obtain a configuration $z$ where all components are $0$ except
the components $p, \forall p\in M'$, and the component $t$.
Then, because $\bnofcpn{\netn}_{t}(z)=0$, the latter component is set to $0$, resulting in
the configuration $\bnofcpn{M'}$.

($\Leftarrow$)
Let us assume there exists $y\in\B^{\card P+\card T}$ such that
$\bnofcpn{M}\aite{\bnofcpn{\netn}} y$.
Necessarily, there is a unique $t\in T$ such that $y_{t}=1$;
moreover, $\preset {\context t}\subseteq M$.
Remark that as long as the $t$\nth component of a configuration $x$ is $1$,
none of the other  components $t'$ for $t'\in T, t'\neq t$ can be set to $1$
(because $\bnofcpn{\netn}_{t'}(x)=0$).
Moreover, remark that in the configuration $y$, $\{ p \in P \mid y_{p}\neq \bnofcpn{\netn}_{p}\} =
\preset t\cup \postset t$, and that the component $t$ can be set to $0$ only when all these latter
components have been updated.
Therefore, with $M'' =(M\setminus \preset t)\cup\postset t$,
we obtain that
$\bnofcpn{M}\areach{\bnofcpn{\netn}} \bnofcpn{M''}$
and $M\paite{\netn} M''$.
\qed
\end{proof}

The reachability problem consists in deciding if there exists a sequence of transitions from a given
configuration (marking) $x$ to a given configuration $y$.
The reachability problem is PSPACE-complete in safe RPNs with asynchronous update mode~\cite{ChengEP95}.
By linear reduction to BNs, we therefore obtain that reachability in BNs is PSPACE-hard:
\begin{corollary}
Reachability in asynchronous BNs is PSPACE-hard.
\end{corollary}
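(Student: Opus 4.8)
The plan is to give a polynomial-time (in fact linear) many-one reduction from reachability in safe RPNs under the atomic semantics---known to be PSPACE-complete by~\cite{ChengEP95}---to reachability in asynchronous BNs, using the encoding $\bnofcpn{\cdot}$ of Def.~\ref{def:cpn2bn} together with its correctness statement, Theorem~\ref{thm:cpn2bn}. First I would fix an instance of RPN reachability: a safe RPN $\netn=\cpnt$ and two markings $M,M'\subseteq P$, asking whether $M\paite{\netn}\closure M'$. Since Def.~\ref{def:cpn2bn} requires a loop-free net, I would begin with a preprocessing step replacing every loop (a place lying in both $\preset t$ and $\postset t$) by a read arc; as recalled above, this leaves the atomic semantics unchanged and is linear in the size of $\netn$, so there is no loss of generality in assuming $\netn$ loop-free.

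Next I would build the BN $\bnofcpn{\netn}$ and the two configurations $\bnofcpn{M},\bnofcpn{M'}$ of Def.~\ref{def:cpn2bn}. The crucial point---the one that makes the whole argument go through---is that this construction is size-efficient: $\bnofcpn{\netn}$ has $\card P+\card T$ components, and each Boolean function is written explicitly as a formula whose length is linear in the local degree of the corresponding place or transition. Hence the entire reduction is computable in time linear in the size of $\netn$. With the reduction in hand, Theorem~\ref{thm:cpn2bn} yields the equivalence $M\paite{\netn}\closure M'\iff\bnofcpn{M}\areach{\bnofcpn{\netn}}\bnofcpn{M'}$, so any algorithm deciding asynchronous BN reachability decides safe RPN reachability within the same resource bound up to a polynomial; PSPACE-hardness of the latter therefore transfers to the former.

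The step I expect to require the most care---and hence the main obstacle---is precisely the size analysis of the reduction. One must check that the encoding does not blow up: in contrast to the $\cpnofbn{\cdot}$ direction of Def.~\ref{def:bn2cpn}, where converting the functions to disjunctive normal form can create exponentially many transitions, the $\bnofcpn{\cdot}$ direction produces the Boolean functions directly as short formulas, so no DNF expansion occurs and the reduction stays linear. I would therefore emphasize this asymmetry and note that the argument relies on the Boolean functions of the target BN being represented succinctly (as formulas or circuits rather than truth tables), which is the standard convention and under which Theorem~\ref{thm:cpn2bn} applies verbatim.
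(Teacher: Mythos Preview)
Your proposal is correct and follows exactly the paper's own argument: cite PSPACE-completeness of reachability in safe (R)PNs from~\cite{ChengEP95}, observe that the encoding $\bnofcpn{\cdot}$ of Def.~\ref{def:cpn2bn} is a linear-size reduction whose correctness is Theorem~\ref{thm:cpn2bn}, and conclude. The paper compresses all of this into a single sentence (``By linear reduction to BNs, we therefore obtain\ldots''), whereas you spell out the loop-free preprocessing and the size analysis explicitly; these additions are sound and make the argument more self-contained, but there is no difference in approach.
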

Finally, one can remark that deciding the reachability in BNs is in PSPACE:
given a BN of dimension $n$ and the initial configuration $x$,
let us define a counter using $n$ bits, initially with value $0$.
Then, while the counter has value strictly less that $2^n$ and the current configuration is not equal to
$y$, non-deterministically apply an update, and increase the counter by one.
\begin{theorem}
Reachability in asynchronous BNs is PSPACE-complete.
\end{theorem}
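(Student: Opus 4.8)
The plan is to prove the two inclusions separately and then combine them. PSPACE-hardness has already been established in the preceding corollary, via the linear reduction from the (PSPACE-complete) reachability problem for safe RPNs furnished by Theorem~\ref{thm:cpn2bn}; so the only remaining work is to show that reachability in asynchronous BNs lies in PSPACE. I would therefore focus entirely on exhibiting a polynomial-space decision procedure for the question ``does $x\areach{f} y$ hold?''.

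First I would formalise the nondeterministic algorithm sketched just above the statement. The key combinatorial observation is that $\B^n$ contains only $2^n$ configurations, so if $y$ is reachable from $x$ under $\aite f$, then it is reachable by a path visiting no configuration twice, hence by fewer than $2^n$ steps. This justifies bounding the number of guessed steps by an $O(n)$-bit counter. The procedure maintains only the current configuration (an element of $\B^n$, i.e.\ $n$ bits) together with the counter; at each iteration it nondeterministically picks a component $i\in\range n$, sets the target value to $y_i=f_i(\text{current})$, checks that this is a legal asynchronous step by evaluating the single function $f_i$ on the stored configuration, updates the current configuration in place, increments the counter, and accepts as soon as the current configuration equals $y$.

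The space accounting is the heart of the membership argument: the configuration and the counter each occupy $O(n)$ bits, and evaluating one $f_i$ on a stored configuration uses space polynomial in the size of the representation of $f_i$ (a circuit or formula), which is part of the input. Crucially, nothing about the guessed path is retained beyond the single current configuration, so the total working space is polynomial in the input size. This yields an NPSPACE procedure, and by Savitch's theorem $\mathrm{NPSPACE}=\mathrm{PSPACE}$, giving membership in PSPACE. Combined with the hardness from the corollary, this establishes PSPACE-completeness.

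The step I expect to require the most care is making the space bound watertight: one must be explicit that the $2^n$ cutoff is representable within $O(n)$ bits (so the counter never overflows beyond polynomial size) and that evaluating a single Boolean function does not secretly blow up the space usage, rather than the high-level structure, which is standard. The appeal to Savitch is routine but should be stated explicitly, since the natural decision procedure is nondeterministic whereas the target class PSPACE is deterministic.
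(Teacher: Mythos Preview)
Your proposal is correct and follows essentially the same approach as the paper: PSPACE-hardness is inherited from the corollary, and PSPACE-membership is shown by the nondeterministic walk through $\B^n$ that stores only the current configuration and an $n$-bit step counter, with Savitch's theorem closing the gap between NPSPACE and PSPACE. The paper's own argument is the terse version of exactly this procedure (it leaves Savitch implicit), so your write-up simply makes the space accounting and the appeal to Savitch explicit.
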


\section{Synchronism sensitivity}
\label{sec:sensitivity}
For some 
BN or  RPN, changing the update/firing policy (from synchronous to asynchronous) may have little
impact on the reachable states. For others, it may render configurations reachable,  or exclude previously feasible paths. We say that a network of the latter category  is \emph{synchronism sensitive}. The authors of \cite{Noual2017}
have analyzed this sensitivity in BNs; in this section, we perform an analogous analysis for RPNs.
As we will show, the characterization of synchronism sensitivity in safe RPNs boils down to
the existence of \emph{preemption cycles}, defined below, among the  transitions that are enabled in a given marking.
Moreover, we show that when instantiated on RPNs encoding of BNs (according to
Sect.~\ref{sec:bn2cpn}),
the general characterization of synchronism sensitivity in RPNs allows to recover the results of
synchronism sensitivity in BNs with respect to their influence graph \cite{Noual2017}, with a slight
generalization relaxing the local monotonicity constraints of BNs.

\subsection{Synchronism sensitivity in BNs}
Following \cite{Noual2017}, given a BN $f$ of dimension $n$ where, $\forall i\in \range n$, $f_i$ is
monotonoic, a positive (resp. negative) edge $(j,i)$ of its influence graph $G(f)$
is \emph{frustrated} in a configuration $x\in\B^n$ iff $x_i\neq x_j$ (resp. $x_i=x_j$).
A (directed) cycle in $G(f)$ is \emph{critical} in $x$
iff all its edges are frustrated.

Then, the synchronism sensitivity in BNs can be characterized with respect to their influence graphs
as follows.
\bel[\cite{Noual2017}, Prop. 1] A critical cycle must 
be \emph{NOPE}: \textbf{n}egative with \textbf{o}dd length or \textbf{p}ositive with \textbf{e}ven length. 
\enl
\bet[\cite{Noual2017}] \label{thm:nope-bn}
Synchronism-sensitivity, i.e., the presence of some synchronous transition
that cannot be sequentialized,  in a locally monotonic BN $f$ requires the existence of a critical
cycle, and thus of a \emph{NOPE}-cycle in its influence graph $G(f)$.
\ent

\subsection{Synchronism Sensitivity in RPNs}

Given any safe RPN $\netn=\cpnt$,
call a pair $(\stepn,\markn)\in 2^\trans\times 2^\places$ such that
$\stepn$ is s-enabled but not a-enabled in $\markn$ a \emph{witness of synchronism sensitivity} or,
following \cite{Noual2017}, \emph{normal}.

As in \cite{bald-corr-mont}, we say for any two transitions $\transn_1,\transn_2\in \trans$ that $\transn_1$ 
\emph{preempts}\footnote {for readers familiar with \cite{bald-corr-mont}: we will only need this \emph{immediate} preemption relation $\imasycf$ here, not the full asymmetric conflict  obtained by adding causal precedence} $\transn_2$, written $\transn_1\imasycf \transn_2$ iff the context of $\transn_2$ intersects the preset of $\transn_1$:
\bys
\transn_1\imasycf \transn_2 &\setif&
\preset{\transn_1}\cap\context{\transn_2}.
\eys
\bet\label{th:cycles}
Let $(\stepn,\markn)\in 2^\trans\times2^\places$ such that $\markn$ s-enables $\stepn$. 
\bum\item If $\stepn=\{\transn_{1},\ldots,\transn_{n}\}$ is a \emph{preemption cycle}, i.e.,
\bys
\transn_{1}\imasycf\transn_{2}\imasycf\ldots \transn_{n-1}\imasycf\transn_{n},
\eys
then $(\stepn,\markn)$ is normal.
\item Conversely, if $(\stepn,\markn)$ is normal, then $\stepn$ contains a preemption cycle.
\eum
\ent
\begin{proof}
  Part 1 follows immediately from the assumptions. For Part 2, take any transition $\transn_1\in\stepn$. If there is no place $\placen\in\context{\transn}$ 
  such that $\placen\in\preset{\transn_2}$ for some $\transn_2\in\stepn$, remove $\transn_1$ from $\stepn$ and start over. Otherwise, we have 
  $\transn_2\imasycf\transn_1$, and inspect $\postset(\context{\transn_2})$ as above. Since $|\stepn|=n$, this process terminates after at most $n$ steps, yielding either a 
  decomposition of $\stepn$, or a preemption chain of length at most $n$,  or a preemption cycle of length at most $n$. Only the last case corresponds to $\stepn$ being normal.
\qed
\end{proof}
As an immediate consequence, we note the following minimality result:
\bec
Let $\stepn$ be such that $(\stepn,\markn)$ is normal, but every
$\emptyset\subset\stepn'\subseteq\stepn$ 
(with proper inclusions) is a-enabled, i.e., $(\stepn',\markn)$  is 
not normal. Then $\stepn$ is a minimal preemption cycle.
\enc

In Fig.~\ref{fig:ContextAneg}, $\tau=\{1\DW,2\DW,3\DW\}$ illustrates a preemption cycle, which is
also normal in the marking shown; $\tau'=\{1\UP,2\UP,3\UP\}$ is another preemption cycle which is
not enabled, but would become enabled after firing $\tau$.
In Fig.~\ref{fig:ContextB}, $\tau''=\{1\UP,2\DW\}$ is a preemption cycle, which is normal in the
marking shown.
\begin{figure}\begin{center}
\def\b{1.7}
\def\c{1.1}

\begin{tikzpicture}[>=stealth,shorten >=1pt,node distance=\c cm,auto]
  \node[state] (a_1) at (1*\b, -1.5*\c) [label=below:\text{$1_1$}] {$\bullet$};
  \node[state] (a_0) at (1*\b, -3.5*\c) [label=below:\text{$1_0$}] {};
  \node[state] (b_0) at (4*\b, 0*\c) [label=below:\text{$2_0$}] {};
  \node[state] (b_1) at (4*\b, -2*\c) [label=above:\text{$2_1$}] {$\bullet$}; 
 \node[state] (c_0) at (5*\b, -4*\c) [label=right:\text{$3_0$}] {};
  \node[state] (c_1) at (3*\b, -4*\c) [label=left:\text{$3_1$}] {$\bullet$};
  
  \node[transition] (a_up) at (2*\b, -2.5*\c) {$1\UP$};
  \path[->] (a_0) edge [bend right=20] (a_up);
  \path[-] (c_0) edge (a_up);
  \path[->] (a_up) edge [bend right=20] (a_1);
  
 \node[transition] (a_down) at (0*\b, -2.5*\c) {$1\DW$};
 \path[->] (a_down)  edge [bend right=20] (a_0);
 \path[->]  (a_1) edge [bend right=20] (a_down);
 \path[-] (c_1) [bend left=75] edge (a_down);
 
  \node[transition] (b_down) at (5*\b, -1*\c) {$2\DW$};
   \path[->] (b_down)  edge [bend right=20] (b_0);
 \path[->]  (b_1) edge [bend right=20] (b_down);
 \path[-] (a_1) [bend left=75] edge (b_down);

  \node[transition] (b_up) at (3*\b, -1*\c) {$2\UP$};
    \path[->] (b_up)  edge [bend right=20] (b_1);
 \path[->]  (b_0) edge [bend right=20] (b_up);
 \path[-] (a_0) [bend right=50] edge (b_up);
 
  \node[transition] (c_down) at (4*\b, -5*\c) {$3\DW$};
    \path[->] (c_down)  edge [bend right=20] (c_0);
   \path[->] (c_1)edge [bend right=20] (c_down);
   \path[-] (b_1) [bend left=40] edge (c_down);
   
  \node[transition] (c_up) at (4*\b, -3*\c) {$3\UP$};
  \path[->] (c_up)  edge [bend right=20] (c_1);
   \path[->] (c_0)  edge [bend right=20] (c_up);
    \path[-] (b_0) [bend right=30] edge (c_up);
  
\end{tikzpicture}
\caption{\label{fig:ContextAneg} A translation of the BN
$\langle f_1(x) = \neg x_3, f_2(x) = \neg x_1, \neg f_3(x)\rangle
= x_3$ and the configuration $111$ into RPN.
The step $\stepn=\{1\DW,2\DW,3\DW\}$ is normal and reflects the negative-odd cycle of the BN.
}\end{center}
\end{figure}
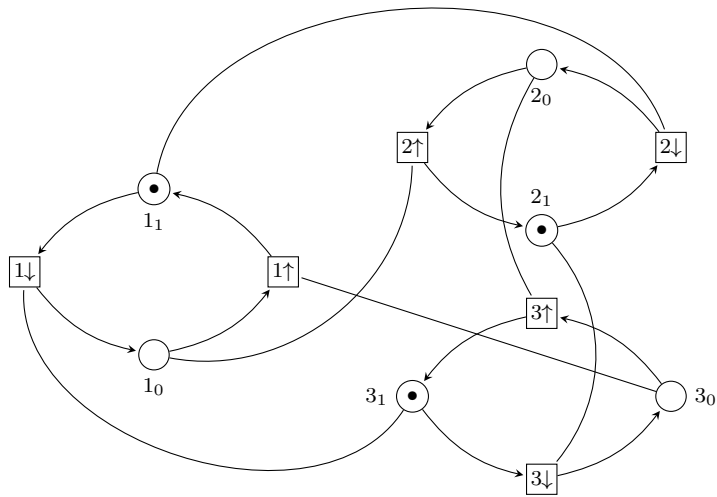
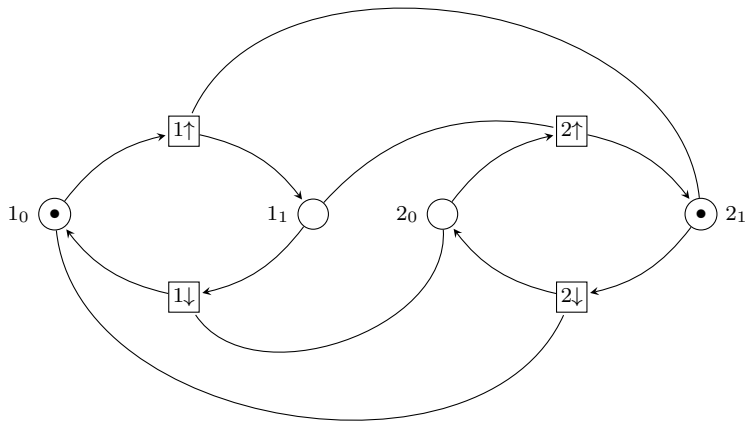
\begin{figure}
\begin{center}
\def\b{1.7}
\def\c{1.1}

\begin{tikzpicture}[>=stealth,shorten >=1pt,node distance=\c cm,auto]
  \node[state] (a_0) at (0*\b, -1*\c) [label=left:\text{$1_0$}] {$\bullet$};
  \node[state] (a_1) at (2*\b, -1*\c) [label=left:\text{$1_1$}] {};
  \node[state] (b_0) at (3*\b, -1*\c) [label=left:\text{$2_0$}] {};
  \node[state] (b_1) at (5*\b, -1*\c) [label=right:\text{$2_1$}] {$\bullet$}; 

  \node[transition] (a_up) at (1*\b, 0*\c) {$1\UP$};
  \path[->] (a_0) edge [bend left=20] (a_up);
  \path[-] (b_1) edge [bend right=75] (a_up);
  \path[->] (a_up) edge [bend left=20] (a_1);
  
 \node[transition] (a_down) at (1*\b, -2*\c) {$1\DW$};
 \path[->] (a_down)  edge [bend left=20] (a_0);
 \path[->]  (a_1) edge [bend left=20] (a_down);
 \path[-] (b_0) [bend left=75] edge (a_down);
 
  \node[transition] (b_up) at (4*\b, -0*\c) {$2\UP$};
    \path[->] (b_up)  edge [bend left=20] (b_1);
 \path[->]  (b_0) edge [bend left=20] (b_up);
 \path[-] (a_1) [bend left=30] edge (b_up);
 
  \node[transition] (b_down) at (4*\b, -2*\c) {$2\DW$};
   \path[->] (b_down)  edge [bend left=20] (b_0);
 \path[->]  (b_1) edge [bend left=20] (b_down);
 \path[-] (a_0) [bend right=75] edge (b_down);

\end{tikzpicture}
\caption{\label{fig:ContextB} A translation of the BN
$\langle f_1(x) = x_2, f_2(x)=x_1\rangle$
and configuration $01$ into RPN.
}\end{center}
\end{figure}

\subsection{Application to RPNs encoding BNs}

We now study how the characterization of synchronism sensitivity carries over to RPNs which
encode BNs following the transformation described in Sect.~\ref{sec:bn2cpn}.
Remember that in this setting, each transition $t$ of the RPN satisify $\preset
t=\{p\}$ and $\postset t=\{\non p\}$ with $\vari(p)=\vari(\non p)$ and
$\vali(p)+\vali(\non p)=1$.
Thus, $t$ corresponds either to an up-transition $\upt(\vari(p))$ iff $\vali(p)=0$ (i.e., $\vali(\non
p)=1$), or to a down-transition $\downt(\vari(p))$ iff $\vali(p)=1$ (i.e., $\vali(\non p)=0$).

Let us assume that the contexts of transitions are minimal, i.e., the DNF being the disjunction of all the context of all the
up- (resp. down-) transitions of a node is minimal.
Given an up-transition $t=\upt(\varn_i)$ (resp. a down-transition $t=\downt(\varn_i)$) of a node
$\varn_i$,
each place $p\in\context t$ corresponds to a node $\varn_j$ with $\vari(p)=\varn_j$.
Then, the sign of the influence from $\varn_j$ to $\varn_i$
is positive if $\vali(p)=1$ (resp. $\vali(p)=0$) and negative otherwise.

Consider  a preemption cycle $\transn_1\imasycf\ldots\imasycf\transn_n\imasycf\transn_1$, and any  arc $(\transn_i,\transn_{i+1})$, identifying $i=1$ and $i=n+1$ in this cycle.
By definition, there exists a place $p\in P$ with $\vari(p)=\varn_j=\vari(t_i)$ such that $\{p\} =
\preset{t_i}\cap\context{t_{i+1}}$,
and a place $q\in P$ with $\vari(q)=\varn_k=\vari(t_{i+1})$ and $\{q\}=\preset{t_{i+1}}$.
If $t_i=\upt(\varn_j)$ (i.e., $\vali(p)=0$),
and $t_{i+1}=\downt(\varn_k)$ (i.e, $\vali(q)=1$),
we say that the type of $(t_i,t_{i+1}$) is $0-1$, written $\nulleins$, and witnesses a positive influence
of $\vari(t_i)$ on $\vari(t_{i+1})$.
Similarly,
if $t_i=\downt(\varn_j)$ and $t_{i+1}=\upt(\varn_k)$,
the type of $(t_i,t_{i+1})$ is $1-0$, written $\einsnull$, witnessing a positive influence of
$\vari(t_i)$ on $\vari(t_{i+1})$;
if $t_i=\upt(\varn_j)$ and $t_{i+1}=\upt(\varn_k)$,
the type of $(t_i,t_{i+1})$ is $0-0$, written $\nullnull$, witnessing a negative influence of
$\vari(t_i)$ on $\vari(t_{i+1})$;
and if $t_i=\downt(\varn_j)$ and $t_{i+1}=\downt(\varn_k)$,
the type of $(t_i,t_{i+1})$ is $1-1$, written $\einseins$, witnessing a negative influence of
$\vari(t_i)$ on $\vari(t_{i+1})$.

As a consequence, in any preemption cycle, the number of type $\nulleins$ arcs and of type
$\einsnull$ arcs must be equal, while nothing can be said in general about the number of $\nullnull$ and
$\einseins$ arcs. Since  $\nullnull$ and $\einseins$ correspond to arcs with negative signs in the
BN's influence graph, adding them in a cycle does not change the cycle's NOPE status 
(it only changes from negative-odd to positive-even, or vice versa).

\bel
Let $\{\transn_1,\ldots,\transn_n\}$ be a preemption cycle in $\stepn$. Then the product of the signs of associated arcs $(\transn_i,\transn_{i+1})$ for $i\in\{1,\ldots,n-1\}$ and $(\transn_n,\transn_1)$ is positive iff $n$ is even.  
\enl

\begin{proof}
  By construction, the types of adjacent arcs have to match:  type 
  $\nulleins$ and  type 
  $\einseins$ arcs can only be followed by $\einsnull$ or $\einseins$, and analogously, types
  $\nullnull$ and $\einsnull$ need a successor arc of type $\nulleins$ or $\nullnull$.
  %
  Hence
  the word $w\in\{\nullnull,\nulleins,\einsnull\einseins\}^*$ associated to the preemption cycle must not contain the infixes $\nullnull\einsnull$, $\nulleins\nulleins$, $\nulleins\nullnull$, $\einsnull\einsnull$, $\einsnull\einseins$ or $\einseins\nulleins$, and not even $\nullnull\einseins$ or $\einseins\nullnull$. Since $w$ also has to be cyclic, this implies that 
  \bum
\item between any occurrences of $\nullnull$ and $\einseins$ ($\einseins$ and $\nullnull$), at least one occurrence of  $\nulleins$ ($\einsnull$) is required; 
\item between any two  occurrences of  $\nulleins$  ($\einsnull$), at least one occurrence of  $\einsnull$ ($\nulleins$) is required;
  \eum
  therefore $|w|_{\nulleins}=|w|_{\einsnull}$, which in turn implies the result.
\qed
\end{proof}

\begin{example}
The preemption cycle $\tau=\{1\DW,2\DW,3\DW\}$ in Fig.~\ref{fig:ContextAneg}
is of type $\einseins\einseins\einseins$, that of
$\tau'=\{1\UP,2\UP,3\UP\}$ of type $\nullnull\nullnull\nullnull$;
the preemption cycle $\tau''=\{1\UP,2\DW\}$ in Fig.~\ref{fig:ContextB} is of type
$\nulleins\einsnull$.
\end{example}

\section{Encoding the Interval Semantics with Boolean Networks}
\label{sec:bn-is}

In this section, we show how the interval semantics for RPNs
(Sect.~\ref{sec:interval-semantics}) can be modelled using BNs with
\emph{asynchronous} updating.
The resulting BNs subsume the generalized asynchronous updating mode, and enable
new reachable configurations, while preserving important dynamical and
structural (influence graph) properties.

The interval semantics relies on decomposing the firing of transitions in two stages: a first stage
checks the pre-conditions and commits the transition, and a second stage eventually applies the
transition (consuming and producing tokens).
Because of this decomposition, the interval semantics adds the possibility to trigger transitions
which become enabled during the firing of other transitions.
Essentially, its application to BNs can be modelled as follows.
Each node $i\in \nodes$ is decoupled in two nodes:
a ``write'' node storing the next value ($2i-1$)
and a ``read'' node for the current value ($2i$).
The decoupling is used to store an ongoing value change, while other nodes of the system still read
the current (to be changed) value of the node.
A value change is then performed according to the automaton given in Fig.~\ref{fig:automaton}:
assuming we start in both write and read node with value $0$,
if $f_i(x)$ is true, then the write node is updated to value $1$.
The read node is updated in a second step, leading to the value where both write and read nodes are
$1$.
Then, if $f_i(x)$ is false, the write node is updated first, followed, in a second stage by the
update of the read node.
\begin{figure}[t]
\centering
\begin{tikzpicture}[state/.style={circle,draw},node distance=2cm,alias/.style={font=\tt}]
\node[state,label={[alias]right:11}] (on) {1};
\node[state,label={[alias]left:01},below left of=on] (onoff) {\phantom 1};
\node[state,label={[alias]left:00},below right of=onoff] (off) {0};
\node[state,label={[alias]right:10},above right of=off] (offon) {\phantom 0};
\path[->,>=latex]
    (off) edge node[below,sloped] {$f_i(x)$} (offon)
    (offon) edge node[above,sloped] {$\epsilon$} (on)
    (on) edge node[above,sloped] {$\neg f_i(x)$} (onoff)
    (onoff) edge node[below,sloped] {$\epsilon$} (off)
;
\end{tikzpicture}
\caption{Automaton of the value change of a node $i$ in the interval semantics.
The states marked $0$ and $1$ represent the value $0$ and $1$ of the node.
The labels $f_i(x)$ and $\neg f_i(x)$ on edges are the conditions for firing the transitions;
$\epsilon$ indicates that the transitions can be done without condition.
The states are labeled by the corresponding values of nodes $(2i-1)(2i)$ in our encoding.}
\label{fig:automaton}
\end{figure}
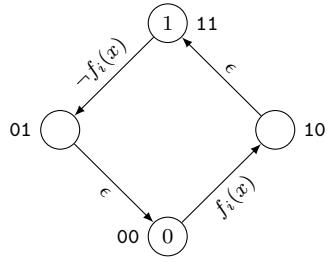

Once the write node ($2i-1$) has changed its value, it can no longer
revert back until the read node has been updated.
Hence, if $f_i(x)$ becomes false in the intermediate value $10$, the read node will still go
through value $1$ (possibly enabling transitions)
before the write node can be updated to $0$, if still applicable.

\subsection{Encoding}

From the automaton given in Fig.~\ref{fig:automaton}, one can derive Boolean functions for the write
($2i-1$)
and read ($2i$) nodes. It results in the following BN $\encode f$, encoding the
interval semantics for the BN $f$:

\begin{definition}[Interval semantics for Boolean networks]
\label{def:encode}
Given a BN $f$ of dimension $n$,
$\encode f$ is a BN of dimension $2n$ where
$\forall i\in \range n$,
\begin{align*}
\encode f_{2i-1}(z) &\DEF \left(f_i(\gamma(z)) \wedge (\neg z_{2i} \vee z_{2i-1})\right)
 \vee (\neg z_{2i} \wedge z_{2i-1})
\\
\encode f_{2i}(z) &\DEF z_{2i-1}
\end{align*}
where
$\gamma(z)\in\B^n$ is defined as
$\gamma(z)_i \DEF z_{2i}$ for every \(i\in \range n\).

\noindent Given $x\in\B^n$, $\alpha(x)\in \B^{2n}$ is defined as
$\alpha(x)_{2i-1} = \alpha(x)_{2i} \DEF x_i$ for every \(i\in\range n\).

\noindent A configuration $z\in\B^{2n}$ is called \emph{consistent} when
$\alpha(\gamma(z))=z$.
\end{definition}

\medskip
The function $\gamma:\B^{2n}\to\B^n$ maps a configuration of the interval
semantics to a configuration of the
BN $f$ by projecting on the read nodes.
The function $\alpha:\B^n\to\B^{2n}$ gives the interval semantics configuration
of a configuration of the Boolean
network $f$, where the read and write nodes have a consistent value.

The correctness of our encoding is given with respect to the interval semantics applied to the RPN
translation of the BN.
It follows from the correspondence between split transitions of the RPN and update of
read and write nodes of the encoded BN:
for any Petri net transition $t$ of the RPN $\cpnofbn f$,
the triggering of $t^-$ matches with the update of the ``write node'' for $\vari(\postset t)$ of the BN, and the
triggering of $t^+$ matches with the update of the ``read node'' for $\vari(\postset t)$ of the BN.
\begin{theorem}\label{thm:simulation}
Given a BN $f$ of dimension $n$,
for all $x,y\in \B^n$,
\[
\cpnofbn x\piite{\cpnofbn f}\closure\cpnofbn y
\Longleftrightarrow
\alpha(x)\areach{\encode f} \alpha(y)\enspace.
\]
\end{theorem}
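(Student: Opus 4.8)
The plan is to set up a tight step-for-step simulation between the atomic runs of the split net \(\splitting(\cpnofbn f)\) --- which by definition are exactly the i-runs of \(\cpnofbn f\), so that \(\cpnofbn x\piite{\cpnofbn f}\closure\cpnofbn y\) unfolds to \(M_x'\paite{\splitting(\cpnofbn f)}\closure M_y'\) with \(M_x'=\{p^c,p^r\mid p\in\cpnofbn x\}\) --- and the asynchronous runs of \(\encode f\). The bridge is a map \(\beta\) sending each marking \(m\) of \(\splitting(\cpnofbn f)\) reachable from a complete marking to a configuration \(\beta(m)\in\B^{2n}\), by setting the read component \(\beta(m)_{2i}=1\) iff \((i+n)^r\in m\), and the write component \(\beta(m)_{2i-1}=1\) iff either \((i+n)^c\in m\) or \(p_t\in m\) for some up-transition \(t\) of node \(i\). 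First I would check that \(\beta\) is well defined: starting from a complete marking, the only transitions touching the places of node \(i\) are its up-transitions (with \(\pre t=\{i\}\), \(\post t=\{i+n\}\)) and down-transitions, and since firing \(t^-\) consumes \(i^c\), at most one commit for node \(i\) can be pending at a time. Hence each node occupies exactly one of four local states, in bijection with the pairs \((z_{2i-1},z_{2i})\in\{00,10,11,01\}\) of the automaton of Fig.~\ref{fig:automaton}. In particular complete markings (no \(p_t\) marked) correspond to consistent configurations, and \(\beta(M_x')=\alpha(x)\).

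Next I would verify the single-step equivalence, which is the heart of the argument: firing \(t^-\) for an up-transition of node \(i\) corresponds to updating the write node \(2i-1\) from \(0\) to \(1\), firing \(t^+\) to updating the read node \(2i\) from \(0\) to \(1\), and symmetrically for down-transitions. The crucial point is that the enabledness conditions coincide. Unfolding \(\encode f_{2i-1}\) in the four local states gives: in state \(00\) and in state \(11\), \(\encode f_{2i-1}(z)=f_i(\gamma(z))\), so the write node flips iff \(f_i(\gamma(z))=1\) (resp.\ \(=0\)); while in the intermediate states \(10\) and \(01\), \(\encode f_{2i-1}(z)=z_{2i-1}\), so a pending commit cannot revert before the read node catches up. On the split-net side, \(t^-\) for an up-transition \(t\) with clause \(C\in\DNF[\neg x_i\wedge f_i]\) is enabled in the low state iff all read copies of \(\cont t\) are marked, i.e.\ \(\gamma(\beta(m))\) satisfies \(C\); since every such clause implies \(f_i\) and conversely \(\neg x_i\wedge f_i=1\) forces one of its clauses, this is equivalent to \(f_i(\gamma(\beta(m)))=1\). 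The irreversibility of a pending commit mirrors exactly Lemma~\ref{lem:complete}. Finally \(\encode f_{2i}=z_{2i-1}\) makes the read node flip precisely in states \(10,01\), matching the enabledness of \(t^+\) (which needs \(p_t\in m\)).

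With the single-step equivalence in hand, both directions follow by induction on run length. For (\(\Rightarrow\)), an a-run of \(\splitting(\cpnofbn f)\) from \(M_x'\) to \(M_y'\) projects under \(\beta\) to a sequence of configurations in which consecutive elements differ in exactly one write or read component, yielding an asynchronous run of \(\encode f\) from \(\alpha(x)\) to \(\alpha(y)\). For (\(\Leftarrow\)), each asynchronous update is lifted: a write flip \(0\to1\) at node \(i\) certifies \(f_i(\gamma(z))=1\), so we may choose any up-transition \(t\) whose clause is satisfied and fire \(t^-\); a read flip is lifted to the corresponding \(t^+\), and symmetrically for down moves. Here one uses that every configuration reachable from \(\alpha(x)\) lies in the image of \(\beta\), because the per-node reachable values are exactly \(00,10,11,01\).

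I expect the main obstacle to be the enabledness bookkeeping of the second paragraph: translating the DNF-clause context condition of the RPN encoding into the single condition \(f_i(\gamma(z))\) appearing in \(\encode f_{2i-1}\), and simultaneously confirming that the intermediate states \(10,01\) admit no spurious BN update forbidden by the split net. The many-to-one nature of \(\beta\) (several transitions per node collapsing to one write/read pair) is what forces the (\(\Leftarrow\)) lifting to make a choice, but since \(f_i(\gamma(z))\) exactly certifies the existence of a satisfied clause, the choice is always available and the abstraction is faithful.
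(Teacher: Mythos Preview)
Your proposal is correct and takes essentially the same approach as the paper. The paper does not spell out a proof of this theorem; it only states, just before the theorem, that the result ``follows from the correspondence between split transitions of the RPN and update of read and write nodes of the encoded BN: for any Petri net transition \(t\) of the RPN \(\cpnofbn f\), the triggering of \(t^-\) matches with the update of the `write node' for \(\vari(\postset t)\) of the BN, and the triggering of \(t^+\) matches with the update of the `read node' for \(\vari(\postset t)\) of the BN.'' Your map \(\beta\), the per-node four-state analysis tying it to the automaton of Fig.~\ref{fig:automaton}, the DNF-clause argument matching context enabledness with \(f_i(\gamma(z))\), and the induction on run length are precisely a fleshed-out version of this one-sentence sketch; the non-injectivity of \(\beta\) and the resulting choice in the \((\Leftarrow)\) lifting that you flag is indeed the only point requiring care, and you handle it correctly.
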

%


\subsection{Consistency}
\label{sec:consistency}

The above theorem shows that the asynchronous updating of the BN $\encode f$ encoding the
interval semantics can reproduce any behaviour of the generalized asynchronous updating of $f$.
The aim of this section is to show that the interval semantics still preserves important constraints
of the BN on its dynamics.
In particular, we show the one-to-one relationship between the fixpoints of the BN and
its encoding for interval semantics; and that the influences are preserved with their sign.

Lemma~\ref{lem:consistent} states that from any configuration of encoded BN, one can always
reach a consistent configuration:
\begin{lemma}[Reachability of consistent configurations]
\label{lem:consistent}
For any $z\in \B^{2n}$ such that
$\alpha(\gamma(z)) \neq z$,
$\exists y\in \B^n: z\areach{\encode f} \alpha(y)$.
\end{lemma}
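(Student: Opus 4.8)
The plan is to show that from any inconsistent configuration $z$, one can perform a sequence of asynchronous updates that drives the configuration to a consistent one. The key observation is that inconsistency means there exists at least one node $i\in\range n$ with $z_{2i-1}\neq z_{2i}$, i.e.\ the write node and read node disagree. I would measure progress by the number of such \emph{discordant} indices $D(z)\DEF\{i\in\range n\mid z_{2i-1}\neq z_{2i}\}$, and argue that each discordant index can be resolved by updating its read node, without introducing new discordances.

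\medskip
\noindent\textbf{Main argument.} First I would observe that the read-node update function is simply $\encode f_{2i}(z)=z_{2i-1}$, so updating node $2i$ sets the read value equal to the current write value. The crucial point is that this update affects only the value $z_{2i}$; and by inspecting the write-node function $\encode f_{2i-1}$, the value $z_{2i}$ enters only the functions for nodes associated with index $i$ itself (through $z_{2i}$ appearing in $\encode f_{2i-1}$) and potentially through $\gamma(z)_i=z_{2i}$ feeding into the functions $f_j(\gamma(z))$ of other indices $j$. So I would need to be careful: changing a read node can change the values the functions $f_j$ evaluate to for \emph{other} write nodes. However, the claim only concerns reaching \emph{some} consistent configuration, not reproducing a specific dynamics, so it suffices to exhibit one terminating update schedule.

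\medskip
\noindent The cleanest schedule is: pick any discordant index $i\in D(z)$ and fire the read node $2i$ via an asynchronous step $z\aite{\encode f}z'$ with $z'_{2i}=z_{2i-1}$ and $z'_{2i}\neq z_{2i}$ (valid since $\encode f_{2i}(z)=z_{2i-1}\neq z_{2i}$). After this update, index $i$ is concordant, i.e.\ $z'_{2i-1}=z'_{2i}$. The obstacle, and the step I expect to be most delicate, is to verify that updating $2i$ does not spoil already-concordant indices: the write node $2i-1$ is unchanged by this step, so $i$ stays concordant; but for another index $j$ with $z_{2j-1}=z_{2j}$, the value $z'_{2j-1}$ and $z'_{2j}$ are both unchanged by firing $2i$ (we fired only node $2i$, and $2j,2j-1\neq 2i$). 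Hence the set of discordant indices strictly decreases: $D(z')=D(z)\setminus\{i\}$. Therefore after at most $\card{D(z)}\le n$ such read-node updates we reach a configuration $z^{\star}$ with $D(z^{\star})=\emptyset$, which by definition means $z^{\star}_{2i-1}=z^{\star}_{2i}$ for all $i$, i.e.\ $z^{\star}=\alpha(\gamma(z^{\star}))$ is consistent. Setting $y\DEF\gamma(z^{\star})$ gives $z\areach{\encode f}\alpha(y)$, as required.

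\medskip
\noindent The subtlety worth double-checking is purely whether each chosen read-node update is genuinely an asynchronous transition under $\encode f$: this needs $\encode f_{2i}(z)\neq z_{2i}$, which holds exactly because $i$ is discordant ($z_{2i-1}\neq z_{2i}$) and $\encode f_{2i}(z)=z_{2i-1}$. Since we only ever fire read nodes and never touch write nodes, the write values stay frozen throughout the schedule, so the termination measure $D$ is monotonically decreasing and the construction is immediate. The whole proof is thus a short induction on $\card{D(z)}$.
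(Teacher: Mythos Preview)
Your proof is correct and follows essentially the same approach as the paper: update the read nodes $2i$ for each discordant index $i$, noting that this is a valid asynchronous step (since $\encode f_{2i}(z)=z_{2i-1}\neq z_{2i}$) and that it never disturbs the write nodes, so the number of discordances strictly decreases until a consistent configuration is reached. Your treatment is in fact a bit more careful than the paper's, which leaves the validity of each step implicit and writes $y=\gamma(z)$ where $y=\gamma(z^{\star})$ (equivalently $y_i=z_{2i-1}$) is what is actually obtained.
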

\begin{proof}
For each $i\in \nodes$ such that
$z_{2i-1}\neq z_{2i}$,
we update the $2i$ node, in arbitrary order.
This leads to the configuration $z'\in \B^{2n}$
where $\forall i\in\nodes$, $z'_{2i}=z'_{2i-1}=z_{2i-1}$.
Hence, by picking $y=\gamma(z)$, we obtain
$z\areach{\encode f} \alpha(y)$.
\qed
\end{proof}

The one-to-one relationship between fixpoints of $f$ and fixpoints of $\encode f$ is given by the
following lemma:
\begin{lemma}[Fixpoint equivalence]\label{lem:fixpoints}
$\forall x\in \B^n$,
$f(x)=x \Rightarrow f(\alpha(x))=\alpha(x)$;
and
$\forall z\in \B^{2n}$,
$\encode f(z)=z \Rightarrow \alpha(\gamma(z))=z \wedge
f(\gamma(z)) = \gamma(z)$.
\end{lemma}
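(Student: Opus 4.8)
The plan is to prove the two implications separately, working directly from the defining equations in Definition~\ref{def:encode}. Recall that $\encode f$ has dimension $2n$, with $\encode f_{2i-1}(z) = \left(f_i(\gamma(z)) \wedge (\neg z_{2i} \vee z_{2i-1})\right) \vee (\neg z_{2i} \wedge z_{2i-1})$ and $\encode f_{2i}(z) = z_{2i-1}$, where $\gamma(z)_i = z_{2i}$ and $\alpha(x)_{2i-1} = \alpha(x)_{2i} = x_i$.

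For the forward direction, I would assume $f(x) = x$ and compute $\encode f(\alpha(x))$ componentwise. Set $z = \alpha(x)$, so $z_{2i-1} = z_{2i} = x_i$ and $\gamma(z) = x$. The read-node equation is immediate: $\encode f_{2i}(z) = z_{2i-1} = x_i = z_{2i}$. For the write node, since $z_{2i} = z_{2i-1} = x_i$, the second disjunct $\neg z_{2i} \wedge z_{2i-1}$ vanishes (both equal $x_i$, so it is $\neg x_i \wedge x_i = 0$), and the first disjunct is $f_i(x) \wedge (\neg x_i \vee x_i) = f_i(x) = x_i$ using the fixpoint hypothesis. Hence $\encode f_{2i-1}(z) = x_i = z_{2i-1}$, so $\encode f(\alpha(x)) = \alpha(x)$.

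For the reverse direction, I would assume $\encode f(z) = z$ and first show consistency, namely $\alpha(\gamma(z)) = z$. The read-node fixpoint equation gives $z_{2i} = \encode f_{2i}(z) = z_{2i-1}$ for every $i$, which already forces $z_{2i-1} = z_{2i}$. Since $\gamma(z)_i = z_{2i}$ and $\alpha(\gamma(z))_{2i-1} = \alpha(\gamma(z))_{2i} = z_{2i}$, this equality $z_{2i-1} = z_{2i}$ is exactly the statement $\alpha(\gamma(z)) = z$. Then, substituting $z_{2i-1} = z_{2i}$ into the write-node equation and using the fixpoint condition, the second disjunct again vanishes and the first collapses to $f_i(\gamma(z))$, so $z_{2i-1} = \encode f_{2i-1}(z) = f_i(\gamma(z))$; combined with $z_{2i-1} = z_{2i} = \gamma(z)_i$ this yields $f_i(\gamma(z)) = \gamma(z)_i$ for all $i$, i.e.\ $f(\gamma(z)) = \gamma(z)$.

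I expect no serious obstacle here: the lemma is essentially a verification that the automaton of Fig.~\ref{fig:automaton} has its only self-loops at the two consistent states $00$ and $11$. The one point requiring a little care is making sure the case analysis on the write-node disjunction is handled cleanly, but once the consistency relation $z_{2i-1} = z_{2i}$ is extracted from the trivial read-node equation, both disjunctive formulas simplify immediately and the rest is routine Boolean bookkeeping.
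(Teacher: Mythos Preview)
Your proposal is correct and follows essentially the same approach as the paper's own proof: use the read-node equation to force $z_{2i-1}=z_{2i}$ (consistency), then simplify the write-node formula using this equality so that it collapses to $f_i(\gamma(z))$. The only cosmetic difference is that you spell out the vanishing of the second disjunct a bit more explicitly than the paper does.
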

\begin{proof}
Let $x\in \B^n$ be such that $f(x)=x$.
We have that
$\alpha(x)_{2i-1}=\alpha(x)_{2i}=x_i=f_i(x)$.
Hence,
$\encode f_{2i-1}(\alpha(x)) = f_i(\gamma(\alpha(x)))=f_i(x)=\alpha(x)_{2i-1}$;
and $\encode f_{2i}(\alpha(x)) = \alpha(x)_{2i-1} = \alpha(x)_{2i}$.
Thus, $\encode f(\alpha(x))=\alpha(x)$.

Let $z\in \B^{2n}$ be such that $\encode f(z)=z$.
For each $i\in \nodes$,
because $\encode f_{2i}(z)=z_{2i}$, by the definition of
$\encode f_{2i}$, we obtain that $z_{2i}=z_{2i-1}$.
Thus, $\alpha(\gamma(z))=z$.
Moreover,
as
$(\neg z_{2i}\vee z_{2i-1})$ reduces to true
and
$(\neg z_{2i}\wedge z_{2i-1})$ reduces to false,
$\encode f_{2i-1}(z) = f_i(\gamma(z)) = z_{2i-1} = \gamma(z)_i$.
Therefore, $f(\gamma(z))=\gamma(z)$.
\qed
\end{proof}

\subsection{Influence graph}
As defined in Sect.~\ref{sec:defs},
the influence graph provides a summary of the causal dependencies between the value changes of nodes
of the BN.
We show that our encoding of interval semantics preserves the causal dependencies of the original
network, and in particular, preserves the cycles and their signs.

From the definition of $\encode f$, one can derive
that all the influences in $f$ are preserved in $\encode f$, and no additional influences between
different variables $i,j$ are created by the encoding.
This latter fact is addressed by the following lemma:
\begin{lemma}\label{lem:influences}
For any $i,j\in\nodes$, $i\neq j$,
there is a positive (resp. negative) edge from $j$ to $i$
in $G(f)$ if and only if
there is a positive (resp. negative) edge
from $2j$ to $2i-1$ in $G(\encode f)$.
\end{lemma}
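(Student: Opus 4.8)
The plan is to reduce the whole statement to a single local observation: for $j\neq i$, the write-node function $\encode f_{2i-1}(z)$ depends on the coordinate $z_{2j}$ \emph{only} through the subterm $f_i(\gamma(z))$. Indeed, by \rdef{encode} the configuration $\gamma(z)$ enters $\encode f_{2i-1}$ solely via $f_i(\gamma(z))$, and since $\gamma(z)_j=z_{2j}$, flipping $z_{2j}$ flips exactly the $j$-th argument of $f_i$ while leaving the explicit occurrences of $z_{2i-1}$ and $z_{2i}$ untouched. So the entire argument comes down to tracking how $f_i(\gamma(z))$ propagates through the Boolean combination defining $\encode f_{2i-1}$.

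The key computation is a small case analysis on $(z_{2i-1},z_{2i})$. Writing $A\eqdef f_i(\gamma(z))$, one checks directly that $\encode f_{2i-1}(z)=A$ whenever $z_{2i-1}=z_{2i}$ (the two ``consistent'' local states $00$ and $11$ of node $i$), whereas $\encode f_{2i-1}(z)=z_{2i-1}$, and hence is independent of $A$, when $z_{2i-1}\neq z_{2i}$. Consequently, flipping $z_{2j}$ changes $\encode f_{2i-1}$ if and only if $z_{2i-1}=z_{2i}$ and the flip changes $A$; and in that case the change of $\encode f_{2i-1}$ has the same direction as the change of $f_i$, because the dependence is the identity.

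With this in hand both implications and both signs fall out symmetrically. For the forward direction, a positive (resp.\ negative) edge $j\to i$ in $G(f)$ is witnessed by $x,y$ with $\Delta(x,y)=\{j\}$, $x_j<y_j$ and $f_i(x)<f_i(y)$ (resp.\ $>$); I lift it by setting $z_{2k}=x_k$ and $z'_{2k}=y_k$ for all $k$, choosing $z_{2i-1}=z_{2i}$ (say both $0$), and leaving the remaining write nodes arbitrary since they do not occur in $\encode f_{2i-1}$. Then $\Delta(z,z')=\{2j\}$, $\gamma(z)=x$, $\gamma(z')=y$, and $\encode f_{2i-1}(z)=f_i(x)$, $\encode f_{2i-1}(z')=f_i(y)$, so $2j\to 2i-1$ is an edge of $G(\encode f)$ of the same sign. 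For the converse, a witnessing pair $z,z'$ for an edge $2j\to 2i-1$ has $\Delta(z,z')=\{2j\}$ and makes $\encode f_{2i-1}$ change; by the case analysis this forces $z_{2i-1}=z_{2i}$ and makes $f_i(\gamma(z))$ change in the same direction, so $x\eqdef\gamma(z)$ and $y\eqdef\gamma(z')$ satisfy $\Delta(x,y)=\{j\}$, $x_j<y_j$, and witness the corresponding edge $j\to i$ in $G(f)$.

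The only place any care is needed is the case split on $(z_{2i-1},z_{2i})$ establishing that the dependence on $z_{2j}$ factors through $f_i(\gamma(z))$ as the identity; once that is pinned down, checking that a read-node flip $z_{2j}\mapsto\neg z_{2j}$ induces exactly the single-coordinate flip $\gamma(z)_j\mapsto\neg\gamma(z)_j$ and preserves the sign is immediate. I do not anticipate any genuine difficulty beyond this bookkeeping.
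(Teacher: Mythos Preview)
Your argument is correct and follows the same core idea as the paper: both hinge on the identity $\encode f_{2i-1}(z)=f_i(\gamma(z))$ whenever $z_{2i-1}=z_{2i}$, together with the observation that flipping $z_{2j}$ flips exactly coordinate $j$ of $\gamma(z)$. The paper's proof is terser---it only writes out the consistent-state construction $z=\alpha(x)$, which strictly speaking handles just the forward direction---whereas your explicit case split on $(z_{2i-1},z_{2i})$ also shows that any witness for the edge $2j\to 2i-1$ must have $z_{2i-1}=z_{2i}$ (since otherwise $\encode f_{2i-1}(z)=z_{2i-1}$ is independent of $z_{2j}$), making the converse rigorous.
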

\begin{proof}
Let us define $x,y\in \B^n$ such that
$\Delta(x,y)=\{j\}$, and
$z,z'\in\B^{2n}$ such that
$z=\alpha(x)$
and
$\Delta(z,z')=\{2j\}$, i.e.,
$z'_{2j} = y_j$.
Because $z_{2i}=z_{2i-1}$
and, as $i\neq j$,
$z'_{2i}=z'_{2i-1}$,
we obtain that
$\encode f_{2i-1}(z) = f_i(x)$
and
$\encode f_{2i-1}(z') = f_i(y)$.
\qed
\end{proof}

\begin{lemma}For any $i\in\nodes$,
\label{lem:self-influences}
\begin{enumerate}
\renewcommand{\theenumi}{\alph{enumi}}
\item there is a positive self-loop on $2i-1$ in $G(\encode f)$ if and only if
there exists $x\in\B^n$ such that $f_i(x)=x_i$;
\item there is never a negative self-loop on $2i-1$ in $G(\encode f)$;
\item there is never a positive edge from $2i$ to $2i-1$ in $G(\encode f)$;
\item there is a negative edge from $2i$ to $2i-1$ in $G(\encode f)$ if and only
if there exists $x\in\B^n$ such that $f_i(x)\neq x_i$
\item there is always exactly one edge from $2i-1$ to $2i$ in $G(\encode f)$ and
it is positive.
\end{enumerate}
\end{lemma}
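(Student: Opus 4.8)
The plan is to collapse the Boolean definition of $\encode f_{2i-1}$ to a transparent form by a case split on the read value $z_{2i}$. Writing $F \DEF f_i(\gamma(z))$ and feeding the two possible values of $z_{2i}$ into
$\encode f_{2i-1}(z) = \left(F \wedge (\neg z_{2i} \vee z_{2i-1})\right) \vee (\neg z_{2i} \wedge z_{2i-1})$
yields, after simplification,
\[
\encode f_{2i-1}(z) =
\begin{cases}
F \vee z_{2i-1} & \text{if } z_{2i} = 0,\\
F \wedge z_{2i-1} & \text{if } z_{2i} = 1.
\end{cases}
\]
Every edge of $G(\encode f)$ is, by definition, witnessed by a pair of configurations differing in a single coordinate, so all five items will be read off from this identity together with the trivial $\encode f_{2i}(z) = z_{2i-1}$, keeping in mind that $\gamma(z)_i = z_{2i}$.

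For (a) and (b) I would fix $z_{2i}$ and all read nodes, so that $F$ is a constant, and vary only $z_{2i-1}$ from $0$ to $1$. When $z_{2i}=0$ the output moves $F \mapsto 1$, and when $z_{2i}=1$ it moves $0 \mapsto F$; both are non-decreasing, which at once gives (b), the absence of any negative self-loop on $2i-1$. A strict increase from $0$ to $1$ occurs exactly when $(z_{2i}=0,\,F=0)$ or $(z_{2i}=1,\,F=1)$, i.e.\ precisely when $f_i(x)=x_i$ for $x \DEF \gamma(z)$; since $\gamma$ is onto, such a witness is realisable iff some $x$ with $f_i(x)=x_i$ exists, which is (a).

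Parts (c) and (d) are the delicate ones, and the main obstacle lies here: when I vary $z_{2i}$ while holding $z_{2i-1}=a$ fixed, the term $F = f_i(\gamma(z))$ is itself \emph{not} constant, because it reads $z_{2i}$ through $\gamma(z)_i$. Writing $F_0$ and $F_1$ for the value of $f_i$ at the two configurations (equal off coordinate $i$) whose $i$-th entry is $0$ and $1$, the output equals $F_0 \vee a$ at $z_{2i}=0$ and $F_1 \wedge a$ at $z_{2i}=1$. The decisive observation, which neutralises the self-dependence of $f_i$, is the chain $F_0 \vee a \ge a \ge F_1 \wedge a$: the output can never increase as $z_{2i}$ passes from $0$ to $1$, ruling out a positive edge and proving (c). It strictly decreases iff $F_0 \vee a = 1$ and $F_1 \wedge a = 0$; taking $a=0$ forces $F_0=1$ (a configuration with $x_i=0$, $f_i(x)=1$), while $a=1$ forces $F_1=0$ (a configuration with $x_i=1$, $f_i(x)=0$). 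These two options together are exactly the existence of some $x$ with $f_i(x)\neq x_i$, establishing (d).

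Finally, (e) is immediate from $\encode f_{2i}(z)=z_{2i-1}$: this function depends on the single coordinate $2i-1$ and is the identity in it, hence it strictly increases with $z_{2i-1}$ and depends on no other node, so node $2i$ has exactly one incoming edge, a positive one from $2i-1$. The only genuine subtlety is the self-dependence handled in (c)/(d), resolved by the bound $F_0 \vee a \ge a \ge F_1 \wedge a$ rather than by pretending $F$ is fixed.
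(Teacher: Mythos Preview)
Your proof is correct and follows essentially the same route as the paper: both argue by case analysis on $z_{2i}$ and $z_{2i-1}$, evaluating $\encode f_{2i-1}$ in each case and checking monotonicity. Your upfront simplification to $F\vee z_{2i-1}$ versus $F\wedge z_{2i-1}$ and the inequality chain $F_0\vee a \ge a \ge F_1\wedge a$ make the dependence of $F$ on $z_{2i}$ in parts (c)/(d) explicit, whereas the paper handles that dependence correctly but silently; otherwise the arguments coincide.
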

\begin{proof}
(a)
Let us consider $z,z'\in\B^{2n}$ such that $\Delta(z,z')=\{2i-1\}$ with
$z_{2i-1}=0$:
$\encode f_{2i-1}(z)=0=\neg\encode f_{2i-1}(z')
\Leftrightarrow
[(z_{2i}=0 \wedge f_i(\gamma(z))=0)
\vee
(z_{2i}=1 \wedge f_i(\gamma(z))=1)]
\Leftrightarrow
f_i(\gamma(z))=z_{2i}$.
(b)
Let us consider $z,z'\in\B^{2n}$ such that
$\Delta(z,z')=\{2i-1\}$ with $z_{2i-1}=0$
and
$\encode f_{2i-1}(z)= 1=\neg \encode f_{2i-1}(z')$.
Thus, $z_{2i}=0$, therefore,
$\encode f_{2i-1}(z')=z'_{2i-1}=1$, which is a contradiction.
(c)
Let us consider $z,z'\in\B^{2n}$ such that
$\Delta(z,z')=\{2i\}$ with $z_{2i}=0$:
if $z_{2i-1}=z'_{2i-1}=0$, then
$\encode f_{2i-1}(z) \geq \encode f_{2i-1}(z')$;
if $z_{2i-1}=z'_{2i-1}=1$, then
$\encode f_{2i-1}(z) \geq \encode f_{2i-1}(z')$;
therefore there cannot be a negative edge from $2i$ to $2i-1$ in $G(\encode f)$.
(d)
$\exists z,z'\in\B^{2n}$: $\Delta(z,z')=\{2i\}$,
$z_{2i}=0$,
$\encode f_{2i-1}(z)=1=\neg\encode f_{2i-1}(z')
\Leftrightarrow
[(z_{2i-1}=z'_{2i-1}=0 \wedge f_i(\gamma(z))=1)
\vee
(z_{2i-1}=z'_{2i-1}=1 \wedge f_i(\gamma(z'))=0)]
\Leftrightarrow
\exists x\in\B^n: f_i(x)=\neg x_i$.
(e) By $\encode f_{2i}$ definition.
\qed
\end{proof}

From Lemma~\ref{lem:self-influences}, one can deduce that
if there is a positive self-loop on $i$ in $G(f)$, then
there is a positive self-loop on $2i-1$ in $G(\encode f)$;
and if there is a negative self-loop on $i$ in $G(f)$, then
there is a negative edge from $2i$ to $2i-1$ in $G(\encode f)$.

We can then deduce that the positive and negative cycles of $G(f)$ are preserved in $G(\encode f)$.
It is worth noting that the encoding may also introduce negative cycles between $2i-1$ and $2i$
and positive self-loops on $2i-1$, for some $i\in\nodes$.
\begin{lemma}
To each positive (resp. negative) cycle in $G(f)$ of length $k>1$, there exists a corresponding
positive (resp. negative) cycle in $G(\encode f)$ of length $2k$.
To each positive self-loop in $G(f)$ corresponds one positive self-loop in
$G(\encode f)$;
to each negative self-loop in $G(f)$ corresponds a negative cycle in $G(\encode
f)$ of length $2$.
\end{lemma}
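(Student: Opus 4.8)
The plan is to build each cycle of $G(\encode f)$ by \emph{interleaving} two kinds of arcs already supplied by the preceding lemmas: the ``cross'' arcs $2j\to 2i-1$ that transport the influences of $f$ (Lemma~\ref{lem:influences}), and the internal ``write-to-read'' arcs $2i-1\to 2i$, which by Lemma~\ref{lem:self-influences}(e) are always present and always positive. The slogan is that every original edge $j\to i$ of $G(f)$ gets replaced by a length-$2$ detour $2j\to 2i-1\to 2i$ through the write node of~$i$.

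First I would treat a cycle of length $k>1$, i.e.\ a simple directed cycle $i_1\to i_2\to\cdots\to i_k\to i_1$ in $G(f)$ on distinct nodes, each step an edge of known sign. Replacing every edge $i_m\to i_{m+1}$ by the path $2i_m\to 2i_{m+1}-1\to 2i_{m+1}$ (first arc by Lemma~\ref{lem:influences}, same sign; second arc positive by Lemma~\ref{lem:self-influences}(e)) and concatenating around the cycle yields the closed walk
\[
2i_1\to 2i_2-1\to 2i_2\to 2i_3-1\to\cdots\to 2i_k-1\to 2i_k\to 2i_1-1\to 2i_1.
\]
This visits the $k$ read nodes $2i_1,\dots,2i_k$ and the $k$ write nodes $2i_1-1,\dots,2i_k-1$; these $2k$ nodes are pairwise distinct because the $i_m$ are distinct and read/write nodes have opposite parities, so the walk is a genuine simple cycle of length $2k$. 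For the sign I would just count negative arcs: the $k$ inserted internal arcs are all positive, while the $k$ cross arcs carry exactly the signs of the original edges. Hence the parity of the number of negative arcs is unchanged, so the new cycle is positive (resp.\ negative) precisely when the original one is.

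The two self-loop cases then follow from Lemma~\ref{lem:self-influences} together with the observation stated just before the lemma. A positive self-loop on $i$ in $G(f)$ is witnessed by $x,y$ with $\Delta(x,y)=\{i\}$, $x_i<y_i$ and $f_i(x)<f_i(y)$, whence $f_i(x)=x_i$; part~(a) then gives a positive self-loop on $2i-1$ in $G(\encode f)$. A negative self-loop on $i$ gives $f_i(x)\neq x_i$ for some $x$, so part~(d) provides a negative arc $2i\to 2i-1$, which combined with the positive arc $2i-1\to 2i$ of part~(e) forms a cycle of length $2$ carrying a single negative arc, hence a negative cycle.

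I do not anticipate a real obstacle, since existence and signs of all required arcs are handed to us by Lemmas~\ref{lem:influences} and~\ref{lem:self-influences}. The only points that deserve explicit care are the sign bookkeeping — one must emphasize that the write-to-read arcs are \emph{always} positive, so they never change the parity of negative arcs — and the verification that the read and write nodes of the distinct components $i_1,\dots,i_k$ remain distinct, so that the interleaved walk is indeed a simple cycle of the stated length~$2k$ rather than a shorter closed walk.
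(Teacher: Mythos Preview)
Your proposal is correct and follows essentially the same route as the paper: map each edge $(i,j)$ of the cycle to the two-edge path $(2i,2j-1)(2j-1,2j)$ via Lemma~\ref{lem:influences} and Lemma~\ref{lem:self-influences}(e), and handle self-loops directly by Lemma~\ref{lem:self-influences}. You add a bit more than the paper does---the explicit check that the $2k$ nodes are pairwise distinct and the parity bookkeeping for the sign---which is appropriate and does no harm.
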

\begin{proof}
For cycle of length $k>1$, by Lemma~\ref{lem:influences} and by the fact that
there is a positive edge from $2i-1$ to $2i$ in $G(\encode f)$:
each edge $(i,j)$ in the cycle in $G(f)$ is mapped to the string
$(2i,2j-1)(2j-1,2j)$, giving a cycle in $G(\encode f)$ of the same sign.
Correspondence of self-loops is given by Lemma~\ref{lem:self-influences}.
\qed
\end{proof}

\section{Beyond Generalized Asynchronicity and Interval Semantics}
\label{sec:beyond-general}

BNs are widely used to model the qualitative dynamics of biological networks,
notably of signalling and gene regulation networks.

A major concern is the impact of the chosen updating mode on the validation of
the model.
Indeed, it is usual to assess the accordance of a BN with measurement data,
including time series: it is expected that the observed behaviours can be
reproduced in the abstract model.
With this perspective, the computation of reachable configurations in BNs is
key.
For example, let us assume we observe (in the concrete system) that a given
component (e.g., gene) gets eventually activated:
if the reachability analysis of the BN concludes that no reachable state has this component active,
the model would likely be rejected by the modeller.

In biological applications, the analysis of BNs merely splits into two
scientific sub-communities: the one preferring the synchronous updating mode,
and the one preferring the asynchronous updating mode.
The generalized asynchronous updating, which subsumes synchronous and
asynchronous, seems a good compromise but it received very little attention in
practice. It should be noted that most of computational tools rely only on either synchronous or
asynchronous modes, which can provide a partial explanation.

Is the generalized asynchronous mode the ultimate updating mode when
analysing reachable configurations in BNs for biological systems?
If little is known on time and speed features of the system and the reachability
analysis with generalized asynchronicity concludes on the absence
of the observed state, can we safely invalidate the model?

In the following motivating example (Sect.~\ref{sec:example}),
we show that the generalized asynchronous updating can miss transitions, hence
reachable configurations, which correspond to particular, but plausible,
behaviours.  Thus, the resulting analysis can be misleading on the absence of
some behaviours, notably regarding the reachability of attractors
(configurations reachable in the long-run), and may lead to rejection of valid models.
It is worth noting that the network considered in the example is embedded in many
actual models of biological networks, e.g., \cite{Mai09,Martinez13,Traynard16}.

As introduced in Sect.~\ref{sec:CPN}, the interval semantics of RPNs takes
advantage of the fine-grained specification of causality of transitions to
enable new behaviours, i.e., new reachable states, which can be caused by
specific ordering and duration of updates.
We show in Sect.~\ref{sec:apply-is} that using the encoding of BNs into RPNs provided in
Sect.~\ref{sec:bn2cpn} and applying the interval semantics correctly recovers the
missing reachable configurations in our motivating example.

Finally, in Sect.~\ref{sec:final-semantics} we explore further extensions of the interval semantics
resulting in correct over-approximation of the configurations reachable by any multi-valued
refinement of the BN.

\subsection{Motivating example}\label{sec:example}

Let us consider the BN defined in Fig.~\ref{fig:example}.
The BN and its influence graph suggest that the activity of species 3 increases when 1 is
inactive and 2 is active. In any scenario starting from \(000\) where 3
eventually increases, 2 has to increase to trigger the increase of 3. Hence, according to
the generalized asynchronous updating represented in Fig.~\ref{fig:example}
(c), the only transition which represents an increase of \(3\) is \(010
\rightarrow 011\). After this, no transition is possible.

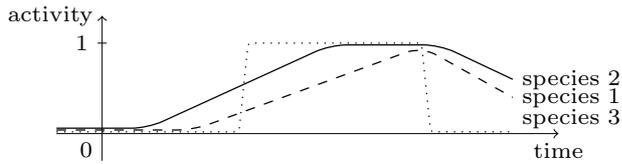
\begin{figure}[t]
\centering
\scalebox{1.2}{
  \begin{tikzpicture}[node distance=2cm]
    \draw[->,very thin] (-1,0) -- (-.5,0) node[below left]{\scriptsize 0} -- (4.5,0) node[below]{\scriptsize time};
    \draw[->,very thin] (-.5,-.3) -- (-.5,1.3) node[left]{\scriptsize activity};
    \draw[-,very thin] (-.55,1) node[left]{\scriptsize 1} -- (-.45,1);

    \draw[rounded corners=5pt] (-1,0.06) -- (0,0.06) -- (2,.98) -- (3.2,.98) --
    (4,.6) node[right]{\scriptsize species 2};
    \draw[rounded corners=5pt, dashed] (-1,0.04) -- (0.5,0.04) -- (3,.96) --
    (4,.4) node[right]{\scriptsize species 1};
    \draw[rounded corners=1pt, dotted] (-1,0.02) -- (1,0.02) -- (1.1,1) --
    (3.0,1) -- (3.1,0.02) -- (4,0.02) node[right, yshift=0.15cm]{\scriptsize species 3};
  \end{tikzpicture}}
\caption{A possible evolution of the activity of species modelled by the
BN of Fig.~\ref{fig:example} (species 1 in dashed line, species 2 plain, species
3 dotted).}
\label{fig:example-beyond}
\end{figure}

But, assuming the BN abstracts continuous evolution of activities,
the following scenario, pictured in Fig.~\ref{fig:example-beyond},
becomes possible: initially, the inactivity of species 1 causes an increase of the
activity of species 2, represented in plain line on the figure. Symmetrically,
the absence of species 2 causes an increase of the activities of species 1 (dashed
line). This corresponds to the evolution described by the arrow \(000
\rightarrow 110\) in Fig.~\ref{fig:example}(b) and leads to a (transient)
configuration where species 1 and 2 are present.

Assume that 1 and 2 activity increase slowly.
After some time, however, the activity of 2 becomes sufficient for
influencing positively the activity of 3, while there is still too little of
species 1 for influencing negatively the activity of 3. Species 3 can then increase.
In the scenario represented in the figure, 3 (dotted line)
increases quickly, and then 1 and 2 continue to increase. In summary, the
activity of species 3 increased from 0 to 1 \emph{during} the increase of 1 and
2, which was not predicted by the generalized asynchronous updating
(Fig.~\ref{fig:example}(b)).

One could argue that in this case, one should better consider more fine-grained
models, for instance by allowing more than binary values on nodes in order to
reflect the different activation thresholds.
However, the definition of the refined models would require additional
parameters (the different activation thresholds) which are unknown in general.
Our goal is to allow capturing these behaviours already in the Boolean
abstraction, so that any refinement would remove possible transitions, and not
create new ones.

\subsection{Application of the Interval Semantics of RPNs}
\label{sec:apply-is}

Let us consider the BN $f$ in Fig.~\ref{fig:example} and its RPN encoding
$\cpnofbn{f}$ in Fig.~\ref{fig:example2cpn}.
Starting from the marking $\cpnofbn{000}$,
$1\UP^-2\UP^-2\UP^+3\UP^-3\UP^+1\UP^+$
is a \emph{complete i-run} (Def.~\ref{def:complete-irun}) of the interval semantics,
and leads to the marking $\cpnofbn{111}$.

Similarly, let us consider the encoding of the interval semantics in the BN $\encode f$, as defined
in Sect.~\ref{sec:bn-is}.
We obtain the following possible sequence of
fully asynchronous updates of $\encode f$:
\begin{align*}
00\,00\,00\aite{\encode f}
10\,00\,00\aite{\encode f}
10\,10\,00\aite{\encode f}
10\,11\,00\\\aite{\encode f}
10\,11\,10\aite{\encode f}
10\,11\,11\aite{\encode f}
11\,11\,11
\end{align*}

Therefore, with the interval semantics, the configuration $111$ of $f$ is reachable from $000$, contrary to
the generalized asynchronous semantics.
This is due to the decoupling of the update of node $1$: the activation of $1$ is delayed which
allows activating node $3$ beforehand.

\subsection{Beyond the Interval Semantics}
\label{sec:final-semantics}

With the interval semantics, during the interleaving of transitions, the nodes have access only to
the before-update value of other nodes.
Moreover, the interval semantics enforces the update application: once an
update is triggered (write node gets a different value than the read node), no further update on the
same node is possible until the update has been applied.
Thus, if for instance the update triggers a change of value from $0$ to $1$, the interval
semantics guarantees that the read node will eventually have the value $1$.

In terms of modeling, the restriction to before-update values in our interval
semantics can be seen as an asymmetry in the consideration of transitions:
the resource modified by the transition is still available during the interval of update, whereas
the result is only available once the transition finished.
When modelling biological systems, it translates into considering only species
which are slow to reach their activity threshold.

Actually, the choice of whether the before-update, after-update or both values
are available during the update may be done according to the knowledge of the
modeled system.
Our construction in Sect.~\ref{sec:bn-is} can easily be adapted for giving access, depending on the node,
to the after-update value instead of the before-update value.
For instance, if the node $j$ should follow closely value changes of node $i$,
then node $j$ should access the after-update value (write node) of $i$, whereas, as in
our motivating example, if $i$ is slow to update compared to $j$,
node $j$ should access the before-update value (read node) of $i$.

\subsubsection{Most Permissive Fully Asynchronous Semantics for Boolean Networks}

Finally, we consider here a more permissive symmetric version which would allow
the access of both before-update and after-update values and do not enforce update application.
This choice may be very reasonable when not much is known about the system, for
instance about the relative speed of the nodes.

This leads us to define a \emph{most permissive fully asynchronous semantics} for BNs
which is defined as a 3-valued semantics in order to represent non instantaneous
updates: a component in a configuration can now have value \(\frac12\), in addition
to the usual 0 and 1, and the updates are done in two stages: if the network is
in a configuration \(x\) where for some \(i\), \(f_i(x)
\neq x_i\), the update of \(x_i\) will be in two stages, going through an
intermediate configuration \(y\) with \(y_i = \frac12\). In this intermediate
configuration \(y\), other updates can occur before the completion of the update
of node \(i\), and they will be allowed to use either the value 0 or 1 for node
\(i\). In the end, for a 3-valued configuration \(x \in \{0,\frac12,1\}^n\), we
allow all the intermediate values to be approximated either as 0 or as 1. The
possible approximations are defined as the set \(\Approx(x)\) of Boolean
configurations \(x' \in \B^n\) such that, for every \(i \in \range n\),
\begin{itemize}
\item \(x'_i = 0\) if \(x_i = 0\),
\item \(x'_i = 1\) if \(x_i = 1\),
\item otherwise \(x'_i\) can be either 0 or 1.
\end{itemize}

\begin{definition}[Most permissive fully asynchronous semantics for Boolean networks]
  Given a BN $f$, the binary irreflexive relation $\mpite f\,\subseteq \{0,\frac12,1\}^n\times\{0,\frac12,1\}^n$
  is defined as:
\begin{align*}
  x \mpite f y\EQDEF \exists &i\in \range n, x' \in \Approx(x): \Delta(x,y)=\{i\}
\\&
\wedge y_i=
\begin{cases}
f_i(x') & \text{if }x_i=\frac12\\
\frac12 &\text{otherwise }(x_i\neq f_i(x'))\enspace.
\end{cases}
\end{align*}
  We write $\mpreach f$ for the transitive closure of $\mpite f$.
\end{definition}

Similarly to the BN encoding of interval semantics presented in Sect.~\ref{sec:bn-is},
the most permissive fully asynchronous semantics of a BN $f$ of dimension $n$
can be encoded as an asynchronous BN $\encodemp f$ of dimension $3n$
where each node $i\in\range n$ is decoupled into an after-update value node $(2i-1)$
and a before-update value node $(2i)$.
As in Def.~\ref{def:encode}, the updating of this latter node consists in copying the after-update
value node: $\encodemp f_{2i}(z)\DEF z_{2i-1}$.
The definition of $\encodemp f_{2i-1}$ is a bit more complex as one has to rewrite $f_i(x)$ to
use (non-deterministically) either the before-update or after-update value of input nodes.
This non-deterministic choice can be encoded using extra ``coin flip'' nodes $(2n+j)$ for
$j\in\range n$ with $\encodemp f_{2n+j}(z)\DEF \neg z_{2n+j}$.
Then, assuming $f_i(x)$ is specified using propositional logic,
the literals $x_j$ appearing in $f_i(x)$ are replaced with
$\Tilde{\Tilde{x}}_j\DEF(z_{2n+j}\vee z_{2j})\wedge (\neg z_{2n+j}\vee z_{2j-1})$.
Also, contrary to the interval semantics, the most permissive fully asynchronous semantics does not enforce the update
application.
Thus, $\encodemp f_{2i-1}(z)\DEF [f_i(x)]_{[\Tilde{\Tilde{x}}_j/x_j, j\in\range n]}(z)$.

\subsubsection{Most permissive fully asynchronous semantics simulates any multivalued refinement}

Multivalued networks are generalization of BNs where the nodes
\(x_i\) can take values other than \(\{0, 1\}\). Let us denote the possible
values as \(\M \eqdef \{0, \frac1m, \dots, \frac{m-1}m, 1\}\) for some integer \(m\).
For simplicity in notations, we assume the same number of values for all the nodes.
A \emph{configuration} is now a vector \(x \in \M^n\).
Given two configurations $x,y\in \M^n$, the components that differ are noted
$\Delta(x,y)\DEF\{ i\in \range n\mid x_i\neq y_i\}$.

In practical modelling applications, multivalued networks enable considering different thresholds
for the interactions from one component to its regulators:
for instance, the activation of a second component may require the first component to be
only slightly active ($\frac 1m$), whereas the activation of a third component may require the full
activation ($1$) of this first one.

Hence, multivalued networks can be considered as refinements of BNs, where in addition to the
logic of interactions, one can mix different thresholds to consider a component active or inactive.
This fined-grained specification requires more information on the system, and it is then natural to
aim at performing analyses at a more abstract level (BN) and then transfering the results to possible
multivalued concretisations of the model.

In this section, we show that the most permissive fully asynchronous semantics enables such a
reasoning for reachability properties: essentially, this semantics captures any
behaviour possible in any multivalued refinement of the BN \emph{with asynchronous updating}.
Therefore, if a configuration is not reachable in the most permissive fully asynchronous semantics,
there exists no multivalued refinement for which the configuration become reachable with
asynchronous updating.

We illustrate this result with Examples~\ref{ex:mpa-1} and~\ref{ex:mpa-2} at the end of the section.
Notably, the last one shows an example where both generalized asynchronous updating and interval
semantics of a BN fail to capture behaviours which are actually possible in a
multivalued refinement of it; these behaviours are correctly preserved by the most permissive fully
asynchronous semantics.

\medskip

From a specification point of view, multivalued networks can be defined similarly to BNs, except that
the functions now map the configurations to either "$\uparrow$" (increase the value of component by $\frac 1m$),
"$-$" (do not change the value of component), or "$\downarrow$" (decrease the value of component by $\frac 1m$).

Def.~\ref{def:multivalued-refinement} formalizes the notion of multivalued refinement:
a multivalued network $F$ refines a BN $f$
if, for every component $i\in\range n$,
for each multivalued configuration $x$,
if $F_i(x)$ leads to an increase (resp. decrease) of the value of $i$,
there is a binarization $x'\in\B^n$ of $x$ such that
$f_i(x')=1$ (resp. $f_i(x')=0$).
Here, the binarization allows to map non-binary values to either $0$ or $1$.

Theorem~\ref{thm:mpa} states that, given a BN $f$,
any \emph{fully asynchronous} transition of any multivalued refinement $F$ of $f$
is captured by the most permissive fully asynchronous semantics, possibly by the mean of several
intermediate transitions.

\begin{definition}[Multivalued network]
A \emph{multivalued network} of dimension $n$ over a value range \(\M =
\{0, \frac1m, \dots, \frac{m-1}m, 1\}\) is a collection of functions
$F=\langle F_1, \ldots, F_n\rangle$ where
$\forall i\in\range n, F_i:\M^n\to \{{\uparrow}, -, {\downarrow}\}$.
\end{definition}

\begin{definition}[Asynchronous updating in multivalued networks]
Given a multivalued network $F$, the binary irreflexive relation $\aite F\,\subseteq \M^n\times\M^n$
is defined as:
\begin{align*}
x\aite F y\EQDEF &\exists i\in \range n: \Delta(x,y)=\{i\} 
\\&\qquad\wedge y_i =
\left\{\begin{array}{@{}l@{\mbox{ if }}l@{}}
  \min\{0, x_i - \frac1m\} & F_i(x) = {\downarrow}\\
  \max\{1, x_i + \frac1m\} & F_i(x) = {\uparrow} \enspace.\\
\end{array}\right.
\end{align*}
We write $\areach F$ for the transitive closure of $\aite F$.
\end{definition}

We now define a notion of \emph{multivalued refinement} of a BN,
which formalizes the intuition that the moves defined by the multivalued network
are compatible with those of the BN.

\begin{definition}[Multivalued refinement]
  \label{def:multivalued-refinement}
  A multivalued network \(F\) of dimension $n$ over a value range \(\M =
  \{0, \frac1m, \dots, \frac{m-1}m, 1\}\) \emph{refines} a BN
  \(f\) of equal dimension \(n\) iff for every configuration \(x \in \M^n\)
  and every \(i \in \range n\):
  \begin{itemize}
  \item \(F_i(x) = {\uparrow} \implies \exists x' \in \Approx(x): f_i(x') = 1\)
  \item \(F_i(x) = {\downarrow} \implies \exists x' \in \Approx(x): f_i(x') = 0\)
  \end{itemize}
  where \(\Approx\) is generalized to multi-valued networks by \(\Approx(x)
  \eqdef \Approx(\abstr(x))\)
    with $\abstr: \M^n\to \{0,\frac12,1\}^n$ mapping every configuration of the multivalued network
    into a 3-valued configuration,
    which is defined for every $i\in\range n$ as:
  \begin{itemize}
  \item \(\abstr(x)_i \eqdef 0\) if \(x_i = 0\),
  \item \(\abstr(x)_i \eqdef 1\) if \(x_i = 1\),
  \item \(\abstr(x)_i \eqdef \frac12\) otherwise.
  \end{itemize}
\end{definition}


\begin{theorem}[Most permissive fully asynchronous semantics simulates any multivalued refinement]
    \label{thm:mpa}
  Let \(f\) be a BN of dimension $n$ and \(F\) a multivalued
  refinement of \(f\). Then
  \[\forall x, y \in \M^n, \quad
  x \aite{F} y \implies \abstr(x) \mpreach{f} \abstr(y)\,.\]
\end{theorem}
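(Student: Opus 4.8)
The plan is to reduce a single multivalued asynchronous step to a short chain of most permissive steps, by case analysis on how the one moving coordinate crosses the abstraction thresholds $0$, $\half$, $1$. Fix $x\aite F y$ with $\Delta(x,y)=\{i\}$ and suppose first that $F_i(x)={\uparrow}$ (the case ${\downarrow}$ is symmetric, swapping the roles of $0$ and $1$). Since the transition is non-idle we must have $x_i<1$, so $y_i=x_i+\frac1m$, giving $\abstr(x)_i\in\{0,\half\}$ and $\abstr(y)_i\in\{\half,1\}$ with $\abstr(x)_i$ no larger than $\abstr(y)_i$ in the order $0<\half<1$; and for every $j\neq i$ we have $x_j=y_j$, hence $\abstr(x)_j=\abstr(y)_j$. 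Thus only coordinate $i$ can change its abstracted value, and it can only move upward.

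The engine of the argument is the refinement hypothesis (Def.~\ref{def:multivalued-refinement}): from $F_i(x)={\uparrow}$ I obtain some $x'\in\Approx(x)=\Approx(\abstr(x))$ with $f_i(x')=1$. I would then match each admissible pair $(\abstr(x)_i,\abstr(y)_i)$ against the firing rule of $\mpite f$. If $\abstr(x)_i=\abstr(y)_i=\half$ (the increment stays strictly inside $(0,1)$), then $\abstr(x)=\abstr(y)$ and there is nothing to prove. If $\abstr(x)_i=0$ and $\abstr(y)_i=\half$, the ``otherwise'' branch of the rule applies: since $\abstr(x)_i=0\neq f_i(x')=1$, one step sets coordinate $i$ to $\half$, giving $\abstr(x)\mpite f\abstr(y)$. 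If $\abstr(x)_i=\half$ and $\abstr(y)_i=1$, the ``$x_i=\half$'' branch applies and firing $y_i=f_i(x')=1$ realizes the step directly. The only two-step situation is $\abstr(x)_i=0$, $\abstr(y)_i=1$, which forces $m=1$: here I route through the intermediate configuration $w$ that agrees with $\abstr(x)$ except $w_i=\half$, firing first $0\to\half$ (licensed by $f_i(x')\neq 0$) and then $\half\to 1$.

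The delicate point, and the step I expect to need the most care, is the second half of that two-step case: I must exhibit an approximation of $w$ on which $f_i$ still evaluates to $1$. The key observation is that passing from $\abstr(x)$ to $w$ only \emph{relaxes} the constraint on coordinate $i$ (from forcing the boolean value $0$ to permitting either value) while leaving the constraints on all $j\neq i$ untouched; hence $\Approx(\abstr(x))\subseteq\Approx(w)$, and the very same witness $x'$ already satisfies $x'\in\Approx(w)$ together with $f_i(x')=1$. With this inclusion the second step is justified, and concatenating the at most two steps yields $\abstr(x)\mpreach f\abstr(y)$. The ${\downarrow}$ case is obtained verbatim, replacing $1$ by $0$ and the witness condition $f_i(x')=1$ by $f_i(x')=0$ throughout. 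Since the claim concerns a single $F$-transition, no induction on run length is needed: the multi-step reachability of longer runs is absorbed by the transitivity already built into $\mpreach f$.
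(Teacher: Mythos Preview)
Your proof is correct and follows essentially the same approach as the paper: a case analysis on how the single changing coordinate $i$ sits with respect to the thresholds $0$ and $1$, using the refinement witness $x'\in\Approx(x)$ with $f_i(x')=1$ (resp.\ $0$) to license each $\mpite f$ step. The paper organizes the cases by the concrete values of $x_i$ and $y_i$ and separates $m>1$ from $m=1$ up front, whereas you partition directly by the pair $(\abstr(x)_i,\abstr(y)_i)$; your justification of the second step in the $m=1$ case via the inclusion $\Approx(\abstr(x))\subseteq\Approx(w)$ is a clean way to reuse the same witness $x'$, slightly more explicit than the paper's one-line remark that $x\in\Approx(z)$.
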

\begin{proof}
  We assume first that \(m > 1\).
  By definition of $\aite F$ for multivalued networks, there exists a unique
  \(i\) such that \(\Delta(x,y) = \{i\}\).
  Then we have to study the different cases determined by the value of \(x_i\)
  and of \(F_i(x)\).

  The first case is \(0 < x_i < \frac{m-1}m\) and \(F_i(x) = {\uparrow}\). It
  implies \(y_i = x_i + \frac1m\), and we observe that, in this case,
  \(\abstr(x) = \abstr(y)\). Then trivially \(\abstr(x) \mpreach{f}
  \abstr(y)\).
  The case of \(\frac1m < x_i < 1\) and \(F_i(x) = {\downarrow}\) is
  symmetric.

  The other cases are all similar; consider for instance \(x_i = 0\) and
  \(F_i(x) = {\uparrow}\), which imposes \(y_i = \frac1m\). Notice first that
  \(\Delta(\abstr(x), \abstr(y)) = \{i\}\) and \(\abstr(x)_i = 0\) and
  \(\abstr(y)_i = \frac12\). Now, since \(F\) is a multivalued refinement of
  \(f\), then by Def.~\ref{def:multivalued-refinement}, there exists an
  \(x' \in \Approx(x) = \Approx(\abstr(x))\) such that \(f_i(x') = 1\). 
  Thus, we get \(\abstr(x) \mpite{f} \abstr(y)\).
  The case when $x_i=1$ and $F_i(x)={\downarrow}$ is similar.
  Regarding the case when $y_i=1$ and $F_i(x)={\uparrow}$,
  note that $\abstr(x)_i=\frac12$ and $\abstr(y)_i=1$ and
  that there exists an \(x' \in \Approx(x) = \Approx(\abstr(x))\) such that \(f_i(x') = 1\).
  Thus, $\abstr(x)\mpite f\abstr(y)$.
  The case when $y_i=0$ and $F_i(y)={\downarrow}$ is similar.

  Finally, for \(m = 1\), consider the case where \(x_i = 0\) and
  \(F_i(x) = {\uparrow}\), which imposes \(y_i = 1\).
  Now \(\abstr(x) = x\) and \(\abstr(y) = y\). In the most permissive fully asynchronous
  semantics, we have \(x\mpite f z\mpite f y\) with an intermediate state \(z\)
  defined by \(\Delta(x, z) = \{i\}\) and \(z_i = \frac12\). The transition
  \(z\mpite f y\) is allowed because \(x \in \Approx(z)\).
  \qed
\end{proof}

\begin{example}
    \label{ex:mpa-1}
The scenario pictured in Fig.~\ref{fig:example-beyond} can be obtained as a
behaviour of a 3-level refinement $F$ of the BN $f$ in Fig.~\ref{fig:example}, with the following update functions:
\begin{align*}
F_1(x) &\DEF {\uparrow} \mathrm{\ if\ } x_2 < 1 \mathrm{\ else\ } \downarrow\\
F_2(x) &\DEF {\uparrow} \mathrm{\ if\ } x_1 < 1 \mathrm{\ else\ } \downarrow\\
F_3(x) &\DEF {\uparrow} \mathrm{\ if\ } x_1 \leq {\frac12} \wedge x_2 \geq {\frac12 \mathrm{\ else\ } \downarrow}
\end{align*}
We get
\(000 \aite F 0{\frac12}0 \aite F {\frac12}{\frac12}0 \aite F {\frac12}{\frac12}{\frac12}
\aite F {\frac12}{\frac12}1 \dots\).

In particular, imagine that a fourth species would activate when \(x_1\),
\(x_2\) and \(x_3\) are all \(\geq \frac12\), then even the generalized
asynchronous updating mode would not capture its activation, contrary to our
interval semantics for BNs.
\end{example}

\begin{example}
    \label{ex:mpa-2}
Let us consider the BN $f$ of dimension $3$ defined as follows:
\begin{align*}
f_1(x) &\DEF 1\\
f_2(x) &\DEF x_1\\
f_3(x) &\DEF x_2\wedge \neg x_1
\end{align*}
Starting from configuration $000$ the generalized asynchronous mode allows only the following
transitions $000\ite f 100\ite f 110$, where $110$ is a fixpoint of $f$.
The interval semantics lead to a very similar behaviour, with the following unique sequence of
asynchronous transitions of the BN encoding of the interval semantics:
\begin{align*}
00\,00\,00\aite{\encode f}
10\,00\,00\aite{\encode f}
11\,00\,00\aite{\encode f}
11\,10\,00\aite{\encode f}
11\,11\,00
\end{align*}
Indeed, in order to activate species $2$, $1$ has to be activated first as in the interval
semantics species $2$ only has access to the before-update value of $1$.
Then, once species $1$ is active, it is impossible to activate species $3$.

Now, let us consider the following 3-level refinement $F$ of the BN $f$:
\begin{align*}
F_1(x) &\DEF {\uparrow}\\
F_2(x) &\DEF {\uparrow}\IF x_1\geq \frac12\ELSE \downarrow\\
F_3(x) &\DEF {\uparrow}\IF x_2\geq\frac12\wedge x_1\leq\frac12\ELSE\downarrow
\end{align*}
The following asynchronous transitions are possible from configuration $000$:
$000\aite{F}\frac1200
\aite F\frac12\frac120
\aite F\frac12\frac12\frac12$.
These transitions are also transitions of the most permissive fully asynchronous semantics of $f$, $\mpite f$.
Essentially, as in this semantics species can have access to either the before-update or after-update
value of other species, species $2$ can be activated by reading the after-update value of $1$, while
species $3$ can be activated by reading the before-update value of $1$.
An example of possible sequence of asynchronous transitions of the BN encoding of the most permissive fully asynchronous
semantics is the following:
\begin{align*}
00\,00\,00\aite{\encodemp f}
10\,00\,00\aite{\encodemp f}
10\,10\,00 \aite{\encodemp f}
10\,11\,00 \\\aite{\encodemp f}
10\,11\,10\aite{\encodemp f}
10\,11\,11
\end{align*}

As in the previous example, let us consider a fourth species activated when $x_1$, $x_2$, and $x_3$ are all
greater or equal than $\frac12$: such an activation is captured neither by the generalized
asynchronous updating nor by the interval semantics of the abstract BN $f$, whereas it is captured by
its most permissive fully asynchronous semantics.
\end{example}

\section{Discussion}
\label{sec:discuss}

With this paper, we detailed the link between Boolean Networks (BNs) and Read (or contextual) Petri Nets
(RPNs) by focusing on the analysis of concurrency enabled by the latter framework.
On the one hand, BNs have prominent structural properties between the components and their
evolution, while on the other hand RPNs bring a fine-grained specification of the causality
and effect of transitions.
We show how we can take benefit of both approaches to first bring new updating modes to BNs by
encoding RPN semantics, and, secondly, propose further extensions of these semantics aiming at obtaining
correct Boolean abstractions of discrete dynamical systems.

To sum up, the contributions of this paper include:
\begin{itemize}
\item The encoding of BNs into RPNs, similar to other encodings already existing in the literature,
    here specialized for Read  Petri nets;
\item The encoding of RPNs into BNs, which allows a brief proof by reduction of the
PSPACE-completeness of the reachability decision in asynchronous BNs;
\item A generic characterization of synchronism sensitivity in RPNs, which when instantiated to BN
translations, allows to recover a recent result in BNs;
\item The encoding of the interval semantics of RPNs as asynchronous BNs, enabling new behaviours
missed by usual BN updating modes;
\item An extension of the interval semantics for BNs which guarantees to include the
behaviour of any multivalued refinement.
\end{itemize}

For practical applications, the thorough link between BNs and Petri nets enables the use of
conceptual tools 
 based on causality and concurrency, such as unfoldings
offering more compact representation of behaviours \cite{Esparza08,BBCKRS12,godunf-CONCUR,PRNs-TCS18} 
and for which efficient software tools have been developed for safe PNs \cite{mole} and RPNs
\cite{rodriguez2013cunf}.
For example, \cite{CHJPS14-CMSB,godunf-CONCUR} show the applicability of unfoldings to
analyse reachable states and attractors in BNs with biological use cases having up to 88
components.

The transitions enabled by the interval and most permissive semantics are due to nodes which update
at different time scales.
For instance with the interval semantics, whenever committed to a value change, in the meantime of the update
application, the other nodes of the network still evolve subject to its before-update value.
This time scale consideration brings an interesting feature when modeling biological networks which
gathers processes of different nature and velocity.
Our encodings can be applied only to a subset of nodes, offering a flexible modelling approach.
Moreover, because the encodings rely on asynchronous BNs, they can be implemented
using any software tools supporting the asynchronous updating mode.

The introduction of the most permissive fully asynchronous semantics for BNs motivates future work
to determine if it offers the smallest abstraction of any multivalued refinement (i.e., to any
transition of the most permissive
semantics corresponds an asynchronous transition of a multivalued refinement),
and to assess the complexity of reachability decision.
Finally, further work may 
explore links between BNs and RPNs with real-time semantics
\cite{DBLP:journals/fmsd/BalaguerCH12}, aiming at tightening connections between the two hybrid frameworks.

\section*{Acknowledgements}
The authors acknowledge the support from the French Agence Nationale pour la
Recherche (ANR), in the context of the
ANR-FNR project ``AlgoReCell'' ANR-16-CE12-0034,
from the Labex DigiCosme (project 
ANR-11-LABEX-0045-DIGICOSME) operated by ANR as part of the program 
``Investissement d'Avenir'' Idex Paris-Saclay (ANR-11-IDEX-0003-02),
from Paris Ile-de-France Region (DIM RFSI),
and  from UMI 2000 ReLaX (CNRS, Univ. Bordeaux, ENS Paris-Saclay, CMI, IMSc) for the internship of
Aalok Thakkar at ENS Paris-Saclay, at that time student at Chennai Mathematical Institute, India,
and during which part of this work was done.



\bibliographystyle{spmpsci}      
\bibliography{biblio}   

\begin{thebibliography}{10}
\providecommand{\url}[1]{{#1}}
\providecommand{\urlprefix}{URL }
\expandafter\ifx\csname urlstyle\endcsname\relax
  \providecommand{\doi}[1]{DOI~\discretionary{}{}{}#1}\else
  \providecommand{\doi}{DOI~\discretionary{}{}{}\begingroup
  \urlstyle{rm}\Url}\fi

\bibitem{Aracena08}
Aracena, J.: Maximum number of fixed points in regulatory boolean networks.
\newblock Bulletin of Mathematical Biology \textbf{70}(5), 1398--1409 (2008).
\newblock \doi{10.1007/s11538-008-9304-7}

\bibitem{ADG04a}
Aracena, J., Demongeot, J., Goles, E.: Positive and negative circuits in
  discrete neural networks.
\newblock IEEE Transactions of Neural Networks \textbf{15}, 77--83 (2004)

\bibitem{Aracena09}
Aracena, J., Goles, E., Moreira, A., Salinas, L.: On the robustness of update
  schedules in {B}oolean networks.
\newblock Biosystems \textbf{97}(1), 1 -- 8 (2009).
\newblock \doi{10.1016/j.biosystems.2009.03.006}

\bibitem{ARS17}
Aracena, J., Richard, A., Salinas, L.: Number of fixed points and disjoint
  cycles in monotone boolean networks.
\newblock SIAM Journal on Discrete Mathematics \textbf{31}(3), 1702--1725
  (2017)

\bibitem{Baetens2012}
Baetens, J., der Weeën, P.V., Baets, B.D.: Effect of asynchronous updating on
  the stability of cellular automata.
\newblock Chaos, Solitons \& Fractals \textbf{45}(4), 383 -- 394 (2012).
\newblock \doi{10.1016/j.chaos.2012.01.002}

\bibitem{DBLP:journals/fmsd/BalaguerCH12}
Balaguer, S., Chatain, T., Haar, S.: A concurrency-preserving translation from
  time {Petri} nets to networks of timed automata.
\newblock Formal Methods in System Design \textbf{40}(3), 330--355 (2012).
\newblock \doi{10.1007/s10703-012-0146-4}

\bibitem{BBCKRS12}
Baldan, P., Bruni, A., Corradini, A., K\"{o}nig, B., Rodr\'{\i}guez, C.,
  Schwoon, S.: Efficient unfolding of contextual {P}etri nets.
\newblock TCS \textbf{449}, 2--22 (2012)

\bibitem{bald-corr-mont}
Baldan, P., Corradini, A., Montanari, U.: Contextual {P}etri nets, asymmetric
  event structures, and processes.
\newblock Information and Computation \textbf{171}(1), 1--49 (2001)

\bibitem{BusiP96}
Busi, N., Pinna, G.M.: Non sequential semantics for contextual {P/T} nets.
\newblock In: Application and Theory of {Petri} Nets, \emph{Lecture Notes in
  Computer Science}, vol. 1091, pp. 113--132. Springer (1996)

\bibitem{DBLP:journals/bib/Chaouiya07}
Chaouiya, C.: {Petri} net modelling of biological networks.
\newblock Briefings in Bioinformatics \textbf{8}(4), 210--219 (2007).
\newblock \doi{10.1093/bib/bbm029}

\bibitem{ChaouiyaNRT11}
Chaouiya, C., Naldi, A., Remy, E., Thieffry, D.: Petri net representation of
  multi-valued logical regulatory graphs.
\newblock Natural Computing \textbf{10}(2), 727--750 (2011)

\bibitem{DBLP:conf/apn/ChaouiyaRRT04}
Chaouiya, C., Remy, E., Ruet, P., Thieffry, D.: Qualitative modelling of
  genetic networks: From logical regulatory graphs to standard {Petri} nets.
\newblock In: J.~Cortadella, W.~Reisig (eds.) Applications and Theory of Petri
  Nets 2004, 25th International Conference, {ICATPN} 2004, Bologna, Italy, June
  21-25, 2004, Proceedings, \emph{Lecture Notes in Computer Science}, vol.
  3099, pp. 137--156. Springer (2004)

\bibitem{CHJPS14-CMSB}
Chatain, T., Haar, S., Jezequel, L., Paulev{\'e}, L., Schwoon, S.:
  Characterization of reachable attractors using {P}etri net unfoldings.
\newblock In: Computational Methods in Systems Biology, \emph{Lecture Notes in
  Computer Science}, vol. 8859, pp. 129--142. Springer (2014)

\bibitem{CHKS-pn15}
Chatain, T., Haar, S., Koutny, M., Schwoon, S.: Non-atomic transition firing in
  contextual nets.
\newblock In: {A}pplications and {T}heory of {P}etri {N}ets, \emph{Lecture
  Notes in Computer Science}, vol. 9115, pp. 117--136. Springer (2015).
\newblock \doi{10.1007/978-3-319-19488-2\_6}

\bibitem{beyond-general}
Chatain, T., Haar, S., Paulev{\'e}, L.: {Boolean Networks: Beyond Generalized
  Asynchronicity}.
\newblock In: J.M. Baetens, M.~Kutrib (eds.) {Cellular Automata and Discrete
  Complex Systems (AUTOMATA 2018)}, \emph{Lecture Notes in Computer Science},
  vol. 10875, pp. 29--42. {Springer}, Ghent, Belgium (2018)

\bibitem{godunf-CONCUR}
Chatain, T., Paulev{\'e}, L.: {Goal-Driven Unfolding of Petri Nets}.
\newblock In: R.~Meyer, U.~Nestmann (eds.) 28th International Conference on
  Concurrency Theory (CONCUR 2017), \emph{Leibniz International Proceedings in
  Informatics (LIPIcs)}, vol.~85, pp. 18:1--18:16. Schloss
  Dagstuhl--Leibniz-Zentrum fuer Informatik, Dagstuhl, Germany (2017).
\newblock \doi{10.4230/LIPIcs.CONCUR.2017.18}

\bibitem{ChengEP95}
Cheng, A., Esparza, J., Palsberg, J.: Complexity results for 1-safe nets.
\newblock Theoretical Computer Science \textbf{147}(1{\&}2), 117--136 (1995).
\newblock \doi{10.1016/0304-3975(94)00231-7}

\bibitem{Collombet2017}
Collombet, S., van Oevelen, C., Sardina~Ortega, J.L., Abou-Jaoud{\'e}, W.,
  Di~Stefano, B., Thomas-Chollier, M., Graf, T., Thieffry, D.: Logical modeling
  of lymphoid and myeloid cell specification and transdifferentiation.
\newblock Proc. Natl. Acad. Sci. \textbf{114}(23), 5792--5799 (2017).
\newblock \doi{10.1073/pnas.1610622114}

\bibitem{DBLP:conf/forte/CourtiatS94}
Courtiat, J., Sa{\"{\i}}douni, D.: Relating maximality-based semantics to
  action refinement in process algebras.
\newblock In: Formal Description Techniques VII, Proceedings of the 7th {IFIP}
  {WG6.1} International Conference on Formal Description Techniques, Berne,
  Switzerland, 1994, \emph{{IFIP} Conference Proceedings}, vol.~6, pp.
  293--308. Chapman {\&} Hall (1995)

\bibitem{Esparza08}
Esparza, J., Heljanko, K.: Unfoldings -- A Partial-Order Approach to Model
  Checking.
\newblock Springer (2008)

\bibitem{Garg08-Bioinf}
Garg, A., Di~Cara, A., Xenarios, I., Mendoza, L., De~Micheli, G.: Synchronous
  versus asynchronous modeling of gene regulatory networks.
\newblock Bioinformatics \textbf{24}(17), 1917--1925 (2008).
\newblock \doi{10.1093/bioinformatics/btn336}

\bibitem{Goss98}
Goss, P.J.E., Peccoud, J.: Quantitative modeling of stochastic systems in
  molecular biology by using stochastic {Petri} nets.
\newblock Proceedings of the National Academy of Sciences \textbf{95}(12),
  6750--6755 (1998).
\newblock \doi{10.1073/pnas.95.12.6750}

\bibitem{DBLP:journals/tcs/JanickiK93}
Janicki, R., Koutny, M.: Structure of concurrency.
\newblock Theoretical Computer Science \textbf{112}(1), 5--52 (1993).
\newblock \doi{10.1016/0304-3975(93)90238-O}

\bibitem{JanKou97}
Janicki, R., Koutny, M.: Fundamentals of modelling concurrency using discrete
  relational structures.
\newblock Acta Inf. \textbf{34}, 367--388 (1997)

\bibitem{DBLP:journals/tcs/JanickiLKD86}
Janicki, R., Lauer, P.E., Koutny, M., Devillers, R.R.: Concurrent and maximally
  concurrent evolution of nonsequential systems.
\newblock Theor. Comput. Sci. \textbf{43}, 213--238 (1986).
\newblock \doi{10.1016/0304-3975(86)90177-5}.
\newblock \urlprefix\url{https://doi.org/10.1016/0304-3975(86)90177-5}

\bibitem{Kauffman69}
Kauffman, S.A.: Metabolic stability and epigenesis in randomly connected nets.
\newblock Journal of Theoretical Biology \textbf{22}, 437--467 (1969).
\newblock \doi{10.1016/0022-5193(69)90015-0}

\bibitem{PRNs-TCS18}
Kol{\v c}{\'a}k, J., {\v S}afr{\'a}nek, D., Haar, S., Paulev{\'e}, L.:
  {Parameter Space Abstraction and Unfolding Semantics of Discrete Regulatory
  Networks}.
\newblock {Theoretical Computer Science}  (2018).
\newblock In press

\bibitem{Mai09}
Mai, Z., Liu, H.: Boolean network-based analysis of the apoptosis network:
  Irreversible apoptosis and stable surviving.
\newblock Journal of Theoretical Biology \textbf{259}(4), 760 -- 769 (2009).
\newblock \doi{https://doi.org/10.1016/j.jtbi.2009.04.024}

\bibitem{Martinez13}
Martínez-Sosa, P., Mendoza, L.: The regulatory network that controls the
  differentiation of t lymphocytes.
\newblock Biosystems \textbf{113}(2), 96 -- 103 (2013).
\newblock \doi{https://doi.org/10.1016/j.biosystems.2013.05.007}

\bibitem{Noual2017}
Noual, M., Sen{\'e}, S.: Synchronism versus asynchronism in monotonic boolean
  automata networks.
\newblock Natural Computing  (2017).
\newblock \doi{10.1007/s11047-016-9608-8}

\bibitem{Aracena16}
Palma, E., Salinas, L., Aracena, J.: Enumeration and extension of
  non-equivalent deterministic update schedules in boolean networks.
\newblock Bioinformatics \textbf{32}(5), 722--729 (2016).
\newblock \doi{10.1093/bioinformatics/btv628}

\bibitem{gored-TCBB}
Paulev{\'e}, L.: {Reduction of Qualitative Models of Biological Networks for
  Transient Dynamics Analysis}.
\newblock {IEEE/ACM Transactions on Computational Biology and Bioinformatics}
  (2017).
\newblock \doi{10.1109/TCBB.2017.2749225}.
\newblock In press

\bibitem{Popova-Zeugmann05}
Popova-Zeugmann, L., Heiner, M., Koch, I.: Time {Petri} nets for modelling and
  analysis of biochemical networks.
\newblock Fundamenta Informaticae \textbf{67}(1), 149--162 (2005)

\bibitem{RRT08}
Remy, E., Ruet, P., Thieffry, D.: Graphic requirements for multistability and
  attractive cycles in a {B}oolean dynamical framework.
\newblock Advances in Applied Mathematics \textbf{41}(3), 335 -- 350 (2008).
\newblock \doi{10.1016/j.aam.2007.11.003}

\bibitem{Richard10-AAM}
Richard, A.: Negative circuits and sustained oscillations in asynchronous
  automata networks.
\newblock Advances in Applied Mathematics \textbf{44}(4), 378 -- 392 (2010).
\newblock \doi{10.1016/j.aam.2009.11.011}

\bibitem{rodriguez2013cunf}
Rodr{\'\i}guez, C., Schwoon, S.: Cunf: A tool for unfolding and verifying
  {Petri} nets with read arcs.
\newblock In: International Symposium on Automated Technology for Verification
  and Analysis, pp. 492--495. Springer (2013)

\bibitem{Rougny2016}
Rougny, A., Froidevaux, C., Calzone, L., Paulev{\'e}, L.: Qualitative dynamics
  semantics for {SBGN} process description.
\newblock BMC Systems Biology \textbf{10}(1), 1--24 (2016).
\newblock \doi{10.1186/s12918-016-0285-0}

\bibitem{mole}
Schwoon, S.: Mole.
\newblock \url{http://www.lsv.ens-cachan.fr/~schwoon/tools/mole/}

\bibitem{Schonfisch1999}
Schönfisch, B., de~Roos, A.: Synchronous and asynchronous updating in cellular
  automata.
\newblock Biosystems \textbf{51}(3), 123 -- 143 (1999).
\newblock \doi{10.1016/S0303-2647(99)00025-8}

\bibitem{Steggles07}
Steggles, L.J., Banks, R., Shaw, O., Wipat, A.: Qualitatively modelling and
  analysing genetic regulatory networks: a {Petri} net approach.
\newblock Bioinformatics \textbf{23}(3), 336--343 (2007).
\newblock \doi{10.1093/bioinformatics/btl596}

\bibitem{TT95}
Thieffry, D., Thomas, R.: Dynamical behaviour of biological regulatory networks
  -- {II}. {Immunity} control in bacteriophage lambda.
\newblock Bulletin of Mathematical Biology \textbf{57}, 277--297 (1995).
\newblock \doi{10.1007/BF02460619}

\bibitem{Thomas73}
Thomas, R.: Boolean formalization of genetic control circuits.
\newblock Journal of Theoretical Biology \textbf{42}(3), 563 -- 585 (1973).
\newblock \doi{10.1016/0022-5193(73)90247-6}

\bibitem{Traynard16}
Traynard, P., Faur{\'e}, A., Fages, F., Thieffry, D.: Logical model
  specification aided by model-checking techniques: application to the
  mammalian cell cycle regulation.
\newblock Bioinformatics \textbf{32}(17), i772--i780 (2016).
\newblock \doi{10.1093/bioinformatics/btw457}

\bibitem{DBLP:journals/ipl/Vogler95}
Vogler, W.: Fairness and partial order semantics.
\newblock Inf. Process. Lett. \textbf{55}(1), 33--39 (1995).
\newblock \doi{10.1016/0020-0190(95)00049-I}

\bibitem{vogler-TCS02}
Vogler, W.: Partial order semantics and read arcs.
\newblock Theoretical Computer Science \textbf{286}(1), 33--63 (2002)

\bibitem{Winkowski98}
Winkowski, J.: Processes of contextual nets and their characteristics.
\newblock Fundamenta Informaticae \textbf{36}(1) (1998)

\end{thebibliography}


\end{document}